\newcommand*{\nn}{\nu}
\newcommand*{\sepmax}{\sep_{\max}}
\newcommand*{\Emax}{E_{\max}}
\newcommand*{\hG}{\widehat{G}}
\renewcommand*{\cost}{\cC}
\newcommand*{\Vmax}{V_{\max}}
\newcommand*{\nmax}{n_{\max}}
\begin{document}

\title{Swap Equilibria under Link and Vertex Destruction}
\author{Lasse Kliemann \and Elmira Shirazi Sheykhdarabadi \and Anand Srivastav}
\date{\smaller%
  Department of Computer Science\\Kiel University\\Christian\-/Albrechts\-/Platz 4\\24118 Kiel, Germany\\
  \texttt{\{lki,esh,asr\}@informatik.uni-kiel.de}}
\maketitle

\begin{abstract}
  \noindent
  We initiate the study of the \term{destruction model} (\aka \term{adversary model})
  introduced by Kliemann (2010), using the stability concept of \term{swap equilibrium}
  introduced by Alon \etal~(2010).
  The destruction model is a network formation game incorporating the robustness of a network
  under a more or less targeted attack.
  In addition to bringing in the swap equilibrium (SE) concept,
  we extend the model from an attack on the edges of the network
  to an attack on its vertices.
  Vertex destruction can generally cause more harm and tends to be more difficult to analyze.
  \par\smallskip\noindent
  We prove structural results and linear upper bounds or super\-/linear lower bounds on the social cost
  of SE under different attack scenarios.
  The most complex case is when the vertex to be destroyed is chosen uniformly at random
  from the set of those vertices where each causes a maximum number of player pairs to be separated
  (called a max\-/sep vertex).
  We prove a lower bound on the social cost of $\Ohm{n^{3/2}}$ for this case
  and initiate an understanding of the structural properties of SE in this scenario.
  Namely, we prove that there is no SE that is a tree and has only one max\-/sep vertex.
  We conjecture that this result can be generalized,
  in particular we conjecture that there is no SE that is a tree.
  On the other hand, we prove that if the vertex to be destroyed is chosen uniformly at random
  from the set of \emphasis{all} vertices,
  then each SE is a tree (unless it is two\-/connected).
  Our conjecture would imply that moving from the uniform probability measure
  to a measure concentrated on the max\-/sep vertices,
  means moving from no SE having a cycle (unless two\-/connected)
  to each SE having a cycle.
  This would ask for a more detailed study of this transition in future work.
\end{abstract}

\section{Introduction}

Game theoretic models for the study of the decentralized formation of networks gained remarkable attention during the past two decades.
In most of those models, the vertices of a graph correspond to players,
and each player's choice of action can have an influence on the structure of the graph.
Models differ regarding the different actions available to the players
and regarding the criteria under which the quality of the graphs is evaluated.
For the latter, most models have focused on centrality\-/type criteria,
for example, players aim to minimize the sum of distances over all other players.
Recently, robustness aspects have been addressed in the form of the
\term{destruction model} (\aka \term{adversary model})~\cite{Kli10b,Kli11a,Kli14,Kli16}.
In this model, players anticipate the destruction of exactly one edge in the graph,
and the cost function for each player~$v$ gives the expected number of other players that~$v$ will no longer be able to reach
after the destruction.
Social cost is the sum over all players' costs, which is equal to the expected number of separated vertex pairs,
that is, the expected number of all ordered pairs $(v,w)$
such that there is no path anymore between $v$ and $w$ after destruction took place.
The model allows many variations since the edge to destroy is determined randomly
according to a probability measure that may even depend on the graph (this dependence is known to the players).\footnote{%
  The name \enquote{adversary model} is the original one.
  However, \enquote{adversary} was found to be more suited
  to describe an entity that aims at maximizing cost under equilibrium.
  This is not necessarily the case in our model.}

In this work, we use the stability concept of \term{swap equilibrium} (SE)~\cite{ADHL10} for the destruction model.
Moreover, we extend the model to the destruction of exactly one \emphasis{vertex} instead of an edge.
We will henceforth speak of \term{edge destruction model} and \term{vertex destruction model}
in order to distinguish the two.

\subsection*{Previous and Related Work}

Network formation games date back to the 1990ies, see Jackson and Wolinsky~\cite{JW96} for an early publication.
In Computer Science, network formation games have gained attention since the work by Fabrikant \etal~\cite{FLM+03} in 2003.
The destruction model was introduced by Kliemann in 2010~\cite{Kli10b}
and subsequently studied in a series of publications~\cite{Kli11a,Kli14,Kli16}.
The focus was on the price of anarchy for Nash equilibrium (NE) and pairwise stability (PS).
Earlier works on robustness in network formation include~\cite{JW96,CFSK04,BG00a,HS03,HS05}.
None of those earlier models allows a structure\-/dependent destruction probability as in the destruction model;
for a detailed discussion we refer to~\cite[Sec.~4]{Kli11a}.

The stability concepts of NE and PS require an edge cost parameter $\al \in \RRpos$.
Computationally deciding whether a graph constitutes an NE is hindered by an exponential search space,
and can indeed be \NPhard~\cite{FLM+03}.
This has raised concerns about the applicability of the model,
since we should not expect players to solve an \NPhard problem.
Therefore, many variations have been introduced in order to make the model more tractable.
Usually, the idea is to limit the choices of the players,
for example, to single\-/edge deviations~\cite{JW96,Len12}.
An even more drastic approach was suggested by Alon \etal in 2010~\cite{ADHL10}.
They removed the $\al$ parameter and considered only edge swaps:
as graph is a \term{swap equilibrium} if no player can improve her cost by removing an incident edge
and then creating a new incident edge -- this action is called an \term{edge swap}.
It is also allowed to simply remove an incident edge without creating a new one,
which can make sense if edges can be harmful, \eg if cost incorporates the risk of contagion.
A variation, introduced by Mihal{\'a}k and Schlegel in 2012~\cite{MS12}, are \term{edge ownerships}:
each edge is owned by exactly one of its endpoints
and may only be swapped (or removed) by its owner.
The resulting stability concept is called \term{asymmetric swap equilibrium} (ASE).
We do not consider edge ownerships or ASE in this work, but this extension is most certainly interesting for future work.

The following three recent publications address robustness in a network formation framework similar to ours:
\begin{compactitemize}
\item Meirom \etal~\cite{MMA15} consider a cost function that uses a linear combination of
  the lengths of two short disjoint paths.
  The idea is that players build a graph where for each shortest path, there is a backup path of reasonable length.
\item Goyal \etal~\cite{GJK16} consider a model where each player,
  in addition to building edges, can choose to immunize herself in exchange for a fee.
  Then an adversary selects a connected component of non-immunized vertices to destroy;
  an alternative description is that the adversary picks a vertex and then the destruction spreads from there
  while immunized vertices act as firewalls.
  A player's utility is the expected size of her connected component after the destruction took place,
  which is~$0$ if the player itself is destroyed.
  This utility is almost exactly the positive version of our cost: if $\cost(v)$ is the expected number of cut-off vertices,
  then $n-\cost(v)$ is the expected size of $v$'s component after the attack.
  (We differ in that if the player itself is attacked, we say that it is cut-off from $n-1$ vertices,
  so its components size is~$1$, not~$0$.)
  However, the kind of destruction is very different from ours.
  Although it can be formulated as an attack on a vertex, the contagious properties of the attack
  move the focus to different connectivity properties of the attacked vertex in comparison to our model.
  For example, if a leaf (that is, a vertex of degree~$1$) is attacked in our model,
  the overall damage is relatively small, namely we have $2(n-1)$ separated vertex pairs.
  In their model however, if the neighbor of the leaf is not immunized, the destruction will spread and
  the overall damage can be much higher.
  \par
  For earlier work on network formation with contagious risk, see~\cite{BEK13}.
\item Chauhan \etal~\cite{CLMM16} extended the edge destruction model by incorporating distances:
  cost for player $v$ is the expected sum of distances to all other players after edge destruction.
\end{compactitemize}

\subsection*{Our Contribution}

We prove quantitative and structural results for two types of destroyers under the stability concept of swap equilibrium (SE):
the \term{uniform destroyer} picks an edge or vertex uniformly at random,
while the \term{extreme destroyer} picks an edge or vertex uniformly at random
from the set of edges or vertices, respectively, where the destruction of each causes a maximum number of vertex pairs to be separated.
An edge or vertex that does the latter is called a \term{max-sep edge} or \term{max-sep vertex}, respectively.
For edge destruction, we also consider some variations, most notably a destroyer that chooses an edge
from the set of bridges uniformly at random.
For vertex destruction, we also consider the case that the probability for destruction of a vertex $v$ is proportional to its degree $\deg(v)$,
which we call the \term{degree\-/proportional destroyer}.

We prove that for uniform and extreme edge destruction and uniform bridge destruction,
an SE is bridgeless or has a star\-/like structure.
A consequence of this is that in terms of social cost, those SE are very efficient,
namely social cost of any of those SE is $\Oh{n}$.
For uniform vertex destruction, we prove that if an SE is not two\-/connected, then it is a tree.
This again implies an $\Oh{n}$ bound on the social cost.

For the degree\-/proportional vertex destroyer, the situation is very different.
Social cost of an SE can be as high as $\Ohm{n^2}$, which is the highest order possible in this model.
This lower bound is attained on a simple graph, namely a star.

For extreme vertex destruction, we also give a super-linear lower bound on the social cost of SE,
namely $\Ohm{n^{3/2}}$.
The construction is still roughly star\-/like, but more complicated:
we need a clique of certain size at the center to which paths up to a certain length are attached.

Finally, we prove a structural result for extreme vertex destruction:
if $n \geq 8$, there is no tree SE with only one max\-/sep vertex.
This means that in a tree SE (with $n \geq 8$), the destroyer will always have at least two vertices to choose from.

\subsection*{Ongoing Work and Open Problems}

We have extensive experimental evidence and also indications on the theory side
that our structural result for the extreme vertex destroyer
(that there is no SE tree with one max\-/sep vertex for $n \geq 8$) can be extended in two ways.
We conjecture for $n=\Ohm{1}$:
\begin{enumerate}[label=(\roman*)]
\item There is no SE graph under extreme vertex destruction with only one max\-/sep vertex.
  (This extends our result for trees to general graphs.)
\item There is no SE graph under extreme vertex destruction that is a tree.
  (This extends our non\-/existence result for trees with one max\-/sep vertex
  to trees in general.)
\end{enumerate}
We expect to prove at least one of those two conjectures in the near future.

Recall that we prove
that unless the graph is two\-/connected, an SE cannot contain cycles for the uniform vertex destroyer.
On the other hand, for the extreme vertex destroyer, our conjecture (ii) would imply that an SE is only possible if we have at least one cycle.
It would then be interesting to find a family $\parens{\cD^{\veps}}_{\veps \in [0,1]}$ of vertex destroyers
such that $\cD^0$ is the uniform destroyer and $\cD^1$ is the extreme destroyer,
and the others are something suitable in between.
When moving $\veps$ from $0$ to~$1$, the probability measure should concentrate more and more
on the max-sep vertices.
It would then be interesting to find out at which values of $\veps$ the situation switches from
no non\-/two\-/connected SE having a cycle to each SE having a cycle.
Our conjecture (ii) would imply that it must switch an odd number of times.

Another interesting direction is the extension to edge ownerships and the asymmetric swap equilibrium.
Closing the gap between our lower $\Ohm{n^{3/2}}$ bound and the trivial $\Oh{n^2}$ upper bound
for social cost of extreme vertex destruction is also an open task.

\section{Model and Notation}

Fix the number of players $n \in \NN_{\geq 3}$.
All our graphs are finite, simple, and undirected.
The undirected edge $\set{v,w}$ between vertices $v$ and $w$ is denoted $vw$ or $wv$.
Denote $\cG_n$ the set of all graphs on the vertex set $V_n \df \setn{n} \df \setft{1}{n}$.
We use the term \term{player} and \term{vertex} synonymously for graphs in $\cG_n$.
A~\term{swap} for a graph $G \in \cG_n$ is a triple of players $(a,b,c)$ such that
$ab \in E(G)$ and $ac \not \in E(G)$.
Denote $S(G) \subseteq V_n^3$ the set of all swaps of~$G$.
Denote $G+(a,b,c)$ the graph that is obtained from $G$ by removing $ab$ and inserting $ac$;
we say that player $a$ swaps her edge $ab$ for the new edge $ac$.
A~\term{cost function} $\cost_G$ for $G \in \cG_n$ assigns to each player $v \in V_n$ a number $\cost_G(v) \in \RRnn$.
The \term{social cost} of $G$ is $\SC(G) \df \sum_{v \in V_n} \cost_G(v)$.
We call $G$ a \term{swap equilibrium} (SE)
if $\cost_G(a) \leq \cost_{G+(a,b,c)}(a)$ for all $(a,b,c) \in S(G)$
and $\cost_G(a) \leq \cost_{G-ab}(a)$ for all $ab \in E(G)$.

\textbf{Edge Destruction.}
Let $G \in \cG_n$ be connected.
For $v, w \in V_n$ denote $\cR_G(v,w)$ the set of all $v$-$w$ paths in $G$.
The \term{relevance} of $e \in E(G)$ for $v \in V_n$ is
\begin{equation*}
  \rel_G(e,v) \df \card{\set{w \in V_n \suchthat \forall P \in \cR(v,w) \holds e \in E(P)}} \comma
\end{equation*}
that is, the number of vertices for which in order to reach them from $v$, we necessarily have to traverse edge~$e$.
When $e$ is removed from the graph, then there will be exactly $\rel_G(e,v)$ vertices
that $v$ will no longer be able to reach;
we also say that those vertices are \term{cut-off} from~$v$ or that $v$ is \term{cut-off} from them.
An~\term{edge destroyer} $\cD$ is a map that associates with each $G \in \cG_n$ a probability measure $\cD_G$ on $E(G)$,
that is, $\cD_G(e) \in [0,1]$ for each $e$, and \mbox{$\sum_{e \in E(G)} \cD_G(e) = 1$}.
Given $\cD$, we define the \term{cost} for player $v$ in $G$ as
\begin{equation*}
  \cost_G(v) \df \sum_{e \in E} \rel_G(e,v) \, \cD_G(e) \comma
\end{equation*}
that is, the expected number of vertices that $v$ will be cut-off from after one edge is removed
randomly according to the measure $\cD_G$.
If $G$ is disconnected, cost is defined to be~$\infty$.

The \term{separation} $\sep(e)$ of an edge $e \in E(G)$ is the number of ordered player pairs $(v,w)$
such that the removal of $e$ will destroy all $v$-$w$ paths in~$G$.
If $e$ is a non-bridge, then clearly $\sep(e) = 0$.
Otherwise, $G-e$ has exactly two components, say $K_1,K_2 \subseteq V_n$,
and we call $\nn(v) \df \min\set{\card{K_1}, \card{K_2}}$ the \term{minimal\-/component size} of $e$.
Then $\sep(e) = 2 \nn(e) (n-\nn(e))$.
It is easy to see that $\SC(G) = \sum_{e \in E(G)} \sep(e) \, \cD_G(e)$.

\textbf{Vertex Destruction.}
For the vertex destruction model, the destroyer $\cD$ associates each $G \in \cG_n$
with a probability measure on the vertices of $G$, that is, $\cD_G(v) \in [0,1]$ for each $v \in V_n$, and \mbox{$\sum_{v \in V_n} \cD_G(v) = 1$}.
We define the destruction of a vertex not as its removal from the graph,
but as the removal of all its incident edges.
This is reflected by the following definition of relevance and cost.
The \term{relevance} of $u \in V_n$ for $v \in V_n$ is
\begin{equation*}
  \rel_G(u,v) \df \card{\set{w \in V_n \suchthat \forall P \in \cR(v,w) \holds u \in V(P)}} \period
\end{equation*}
Note that since $v$ is in every $v$-$w$ path, we have $\rel_G(v,v) = n-1$,
which is exactly the number of vertices that will be cut-off from $v$ if all edges incident with $v$ are removed.
Given a vertex destroyer $\cD$, we define the cost for player $v$ in $G$ as
\begin{equation*}
  \cost_G(v) \df \sum_{u \in V} \rel_G(u,v) \, \cD_G(u) \period
\end{equation*}
Again, a disconnected graph induces infinite cost for each player.

The \term{separation} $\sep(u)$ of a vertex $u \in V_n$ is the number of ordered player pairs $(v,w)$
such that the removal of $u$ will destroy all $v$-$w$ paths in~$G$.
It is again easy to see that $\SC(G) = \sum_{u \in V_n} \sep(u) \, \cD_G(u)$.
Moreover, if removal of $u$ creates $k$ components (not counting $u$ itself) of sizes $\eli{\bet}{k}$,
we have
\begin{equation}
  \label{eq:sep-beta}
  \sep(u) = n^2 - 1 - \sum_{i=1}^k \bet_i^2 \period
\end{equation}

When the graph $G$ is clear from context, we omit the $G$ subscripts or arguments.
When a graph $G'$ is defined, we write $\cost'$, $\rel'$, $\SC'$ \etc instead of $\cost_{G'}$, $\rel_{G'}$, $\SC(G')$, \etc, respectively.
The same goes for~$G''$.

For any connected graph $G=(V,E)$, we call $I \subseteq V$ an \term{island} if it is inclusion-maximal under the condition that
the induced subgraph $G[I]$ is bridge-free (it does not matter whether we mean bridges of $G$ or bridges of $G[I]$).
The \term{bridge tree} $\tiG$ of $G$ is obtained by collapsing each island $I$ of $G$ to a single vertex $\tiI$ and
inserting an edge between $\tiI$ and $\tiJ$ if and only if an edge runs in $G$ between a vertex of $I$ and a vertex of $J$.
Obviously, $IJ \mapsto \tiI \tiJ$ is a bijection between the set of bridges of $G$ and the set of bridges of $\tiG$,
and we will often identify those two sets.
We refer to \cite{Kli11a} for a more formal treatment of the bridge tree.

We will also use the better-known \term{block-cutvertex} tree, see, \eg, the book by Diestel~\cite{Die05} for a definition.

\section{Uniform Edge and Uniform Bridge Destruction}

Denote $B(G) \subseteq E(G)$ the set of all the bridges of $G \in \cG_n$
and
\begin{equation*}
  S^B(G) \df \set{(a,b,c) \in S(G) \suchthat ab \in B(G) \land \text{$G+s$ is connected}}
\end{equation*}
the set of \term{bridge swaps} (the case that $G+s$ is disconnected is not interesting since such a swap can never bring an improvement).
Clearly, $ac \in B(G+s)$ for each $(a,b,c) \in S^B(G)$.

An edge destroyer $\cD$ is called the \term{uniform edge destroyer}
if $\cD_G(e) = \frac{1}{\card{E(G)}}$ for each $e \in E(G)$ and each $G \in \cG_n$.
It is called the \term{uniform bridge destroyer}
if $\cD_G(e) = \frac{1}{\card{B(G)}}$ for each $e \in B(G)$ 
and $\cD_G(e) = 0$ for each $e \in E(G) \setminus B(G)$,
for each $G \in \cG_n$.
(If $B(G) = \emptyset$, the graph is $2$-edge-connected and we can take any probability measure for $\cD_G$
since cost for each player is~$0$ in any case.)

In order to point out what are some of the essential properties of those destroyers,
we look at more general destroyers first.
Consider the following condition on a destroyer:
\begin{equation}
  \label{eq:sep}
  \begin{split}
    &\forall G \in \cG_n \innerholds
    \forall s=(a,b,c) \in S^B(G) \holds\\
    &\phantom{\land}\quad\sep_G(ab) = \sep_{G+s}(ac) \implies \cD_G(ab) = \cD_{G+s}(ac)\\
    &\land\quad\forall e \in E(G) \cap E(G+s) \holds
    \sep_G(e) = \sep_{G+s}(e) \implies \cD_G(e) = \cD_{G+s}(e)
  \end{split}
\end{equation}
This means that if after a bridge swap an edge maintains its separation, then it also maintains its probability.
This clearly includes the uniform edge destroyer and the uniformly bridge destroyer.
The next proposition shows that \eqref{eq:sep} is equivalent to the following simpler condition:
\begin{equation}
  \label{eq:const}
  \begin{split}
    &\forall G \in \cG_n \innerholds
    \forall s=(a,b,c) \in S^B(G) \holds\\
    &\phantom{\land}\quad\cD_G(ab) = \cD_{G+s}(ac)\\
    &\land\quad\forall e \in E(G) \cap E(G+s) \holds
    \cD_G(e) = \cD_{G+s}(e)
  \end{split}
\end{equation}
This means that each edge carries its fixed probability that -- in case of a bridge -- sticks to it even when it is swapped for another edge.

\begin{proposition}
  (\ref{eq:sep}) and (\ref{eq:const}) are equivalent.
\end{proposition}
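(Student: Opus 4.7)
The direction $(\ref{eq:const}) \Rightarrow (\ref{eq:sep})$ is immediate. For the converse, the plan starts from the observation that for every bridge swap $s = (a,b,c) \in S^B(G)$, the components of $G - ab$ and of $(G+s) - ac$ coincide: both are the pair $K_1, K_2$ with $a \in K_1$ and $b, c \in K_2$. Hence $\sep_G(ab) = \sep_{G+s}(ac)$ holds automatically, and the first clause of $(\ref{eq:sep})$ unconditionally yields $\cD_G(ab) = \cD_{G+s}(ac)$, matching the first clause of $(\ref{eq:const})$.

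The real task is to establish $\cD_G(e) = \cD_{G+s}(e)$ for every $e \in E(G) \cap E(G+s)$, including edges whose separation is not preserved. A short case analysis -- non-bridges remain non-bridges with $\sep=0$; bridges in $K_1$ keep their split; a bridge in $K_2$ keeps its split iff $b$ and $c$ remain on the same side after its removal -- shows that $\sep$ changes precisely for the bridges $e_1, \ldots, e_k$ of $K_2$ on the path in the bridge tree of $K_2$ from the island of $b$ to that of $c$. If $k=0$, $(\ref{eq:sep})$ already closes the argument. Otherwise I would decompose $s$ into $k$ elementary swaps: writing $I_0, \ldots, I_k$ for the islands along the $b$-$c$ path and $w_i$ for the endpoint of $e_i$ in $I_i$ (with $w_0 := b$, $w_k := c$), define $G_i := G - ab + a w_i$ and $s_i := (a, w_{i-1}, w_i)$, so that $G_0 = G$, $G_k = G+s$, and $G_{i-1} + s_i = G_i$. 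Each $s_i$ lies in $S^B(G_{i-1})$ since $G_{i-1} - a w_{i-1} = G - ab$ and $w_i \in K_2$ reconnects; applying the same case analysis to $s_i$ shows that $e_i$ is the single common edge of $G_{i-1}, G_i$ whose separation shifts.

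Within each step, $(\ref{eq:sep})$ supplies the swap-pair identity $\cD_{G_{i-1}}(aw_{i-1}) = \cD_{G_i}(aw_i)$ and the identities $\cD_{G_{i-1}}(e) = \cD_{G_i}(e)$ for every common $e \neq e_i$. Subtracting these from the normalization constraints $\sum \cD_{G_{i-1}} = 1 = \sum \cD_{G_i}$ isolates the remaining equality $\cD_{G_{i-1}}(e_i) = \cD_{G_i}(e_i)$. Because every common edge of $G$ and $G+s$ persists through all intermediate $G_i$ (each $G_i$ differs from $G$ only by the swap of $ab$ for $aw_i$), chaining the per-step equalities yields $\cD_G(e) = \cD_{G+s}(e)$ for all common $e$, completing $(\ref{eq:const})$. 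The main obstacle I expect is the decomposition itself -- specifically, the structural claim that each elementary swap perturbs the separation of exactly one edge. Once that is in hand, the normalization trick closes the proof cleanly.
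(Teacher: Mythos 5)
Your proposal is correct and follows essentially the same route as the paper's proof: the first clause of (\ref{eq:const}) is obtained by noting that a bridge swap preserves the minimal-component size, and the second is obtained by decomposing the swap into elementary swaps along the path between the islands of $b$ and $c$ in the bridge tree, where each step changes the separation of exactly one edge and the normalization $\sum_e \cD(e)=1$ forces that edge's probability to be preserved as well. Your write-up is somewhat more explicit than the paper's (in identifying exactly which bridges change separation and in verifying that each elementary swap is a valid bridge swap), but the decomposition and the normalization trick are identical.
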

\begin{proof}
  It is clear that (\ref{eq:const}) implies (\ref{eq:sep}).
  So let $\cD$ be a destroyer with property (\ref{eq:sep}).
  Let $G \in \cG_n$ and $s=(a,b,c) \in S^B(G)$.
  Since $\nn_G(ab) = \nn_{G+s}(ac)$,
  we have $\sep_G(ab) = \sep_{G+s}(ac)$, hence $\cD_G(ab) = \cD_{G+s}(ac)$.
  Now, consider the special case first that $(a,b,c)$ is a path in the bridge tree.
  Then the only separation that changes due to $s$ is that of $bc$.
  Since separations of all other edges are maintained, they also maintain their probabilities.
  Since all the probabilities add up to $1$, the edge $bc$ also maintains its probability.
  In the general case, we have a path $(v_0=b,v_1,\hdots,v_k=c)$.
  Conducting the sequence of swaps $(a,v_0,v_1), (a,v_1,v_2), \hdots, (a,v_{k-1},v_k)$
  gives the graph $G+s$, and in each step the edges maintain their probability.
\end{proof}

The following proof uses the basic idea from~\cite[Thm.~1]{ADHL10}.
\begin{lemma}
  \label{lem:linkdiam2}
  Let $\cD$ be a destroyer with property (\ref{eq:const}).
  Then the bridge tree of an SE with respect to $\cD$ has diameter at most~$2$.
\end{lemma}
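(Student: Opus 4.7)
The plan is to argue by contradiction: suppose $G$ is an SE with respect to $\cD$ and $\tiG$ has diameter $d \geq 3$. Since a diameter path in a tree has its endpoints at leaves, fix a diameter path $\tiI_0, \tiI_1, \ldots, \tiI_d$ in $\tiG$ with $\tiI_0$ and $\tiI_d$ both leaves of $\tiG$; let $b_i \in B(G)$ be the bridge realizing the edge $\tiI_{i-1}\tiI_i$. Write $b_1 = a u_1$ with $a \in I_0$ and $u_1 \in I_1$, and pick any $v \in I_2$.

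The central move is the bridge swap $s \df (a, u_1, v)$. First one checks $s \in S^B(G)$: $b_1$ is a bridge by construction, $av \notin E(G)$ (the vertices $a$ and $v$ lie in distinct islands not sharing a direct edge), and $G + s$ is connected because $av$ reconnects $I_0$ to the rest of $G$ after $b_1$ is removed. Property~(\ref{eq:const}) then gives $\cD_{G+s}(av) = \cD_G(b_1)$ and fixes every other common edge's probability.

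Next I would describe $\tiG'$ (for $G' \df G+s$) structurally: no internal edge of any island is touched, so the islands of $G$ remain the islands of $G'$; the only change in the collection of bridges is that $b_1$ is deleted and replaced by the new bridge $av$. Equivalently, $\tiG'$ arises from $\tiG$ by detaching $\tiI_0$ from $\tiI_1$ and re\-/attaching it directly to $\tiI_2$. With this picture, the cost change for $a$ reduces to three observations:
\begin{enumerate}[label=(\roman*)]
  \item $\tiI_0$ is a leaf in both $\tiG$ and $\tiG'$, so $\rel_G(b_1, a) = n - \card{I_0} = \rel_{G'}(av, a)$; combined with the probability equality from~(\ref{eq:const}), these contributions cancel.
  \item For every $b \in B(G) \setminus \set{b_1, b_2}$, the vertex set on $a$'s side of $b$ is the same in $G$ and in $G'$, hence $\rel_{G'}(b, a) = \rel_G(b, a)$, and with probabilities preserved the contributions cancel again.
  \item Writing $L_2$ for the number of vertices on the $\tiI_0$\-/side of $b_2$ in $G$, one has $\rel_G(b_2, a) = n - L_2$ (since $a$ lies on the $\tiI_0$\-/side), while $\rel_{G'}(b_2, a) = L_2 - \card{I_0}$ (since $a$ has crossed to the $\tiI_2$\-/side via $av$).
\end{enumerate}
Therefore $\cost_{G'}(a) - \cost_G(a) = (2 L_2 - n - \card{I_0}) \cdot \cD_G(b_2)$, and the SE condition forces $L_2 \geq (n + \card{I_0})/2$ (assuming $\cD_G(b_2) > 0$, which holds e.g.\ for the uniform edge and uniform bridge destroyers).

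Performing the symmetric swap at the $\tiI_d$ end yields $R_{d-1} \geq (n + \card{I_d})/2$, where $R_{d-1}$ is the size of the $\tiI_d$\-/side of $b_{d-1}$. Since $\tiI_0$ and $\tiI_d$ lie at opposite ends of the diameter path, the sets counted by $L_2$ and $R_{d-1}$ are disjoint subsets of $V_n$ (for $d=3$ they even partition $V_n$), so $L_2 + R_{d-1} \leq n$. Adding the two SE inequalities gives
\[
  n \;\geq\; L_2 + R_{d-1} \;\geq\; n + \frac{\card{I_0} + \card{I_d}}{2} \comma
\]
which forces $\card{I_0} + \card{I_d} \leq 0$ and contradicts $\card{I_0}, \card{I_d} \geq 1$.

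The main obstacle I expect is the structural step: verifying carefully that $b_2$ is the unique bridge of $G$ whose $a$\-/side genuinely changes under $s$, and that $av$ is the only new bridge in $G'$. This is precisely where property~(\ref{eq:const}) earns its keep, since without probability preservation under a bridge swap the clean $b_1 \leftrightarrow av$ cancellation in~(i) would no longer be available for free.
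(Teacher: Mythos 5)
Your proof is correct and follows essentially the same route as the paper's: both perform the symmetric pair of bridge swaps that re\-/attach the two end islands of a long path in the bridge tree two steps inward, use property~(\ref{eq:const}) to cancel every contribution except that of the single middle bridge whose relevance for the swapping player changes, and add the two resulting SE inequalities to reach a contradiction. Your inequalities $L_2 \geq (n+\card{I_0})/2$ and $R_{d-1} \geq (n+\card{I_d})/2$ are exactly the paper's $n_c+n_d \leq n_b$ and $n_a+n_b \leq n_c$ in different bookkeeping, and the positivity assumption $\cD_G(b_2)>0$ you flag is indeed needed -- the paper's proof makes it tacitly as well.
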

\begin{proof}
  Let $G$ be an SE and
  for contradiction assume that $(a,b,c,d)$ is a path in its bridge tree.
  Denote $n_a, n_b, n_c, n_d$ the numbers of vertices in the subtrees rooted at $a,b,c$ and $d$, respectively,
  hence $n = n_a + n_b + n_c + n_d$.
  Consider the swap $s=(a,b,c)$.
  By (\ref{eq:sep}), we have $\cD_G(ab) = \cD_{G+s}(ac)$, and also all other edges maintain their probabilities.
  We have $\rel_G(ab, a) = \rel_{G+s}(ac, a)$ and for all the other edges,
  from the view of $a$, the only relevance that changes is that of $bc$,
  namely from $n_c + n_d$ to $n_b$.
  Since $G$ is an SE, this means $n_c + n_d \leq n_b$.
  Likewise we consider the swap $(d,c,b)$ and obtain $n_a + n_b \leq n_c$.
  Together this implies $n_a \leq 0$, which is impossible.
\end{proof}

\begin{theorem}
  \label{prop:uniform-edge-star}
  Let $G$ be an SE for the uniform edge destroyer.
  Then $G$ is bridgeless or a star, hence $\SC(G) \leq 2(n-1) = \Oh{n}$.
\end{theorem}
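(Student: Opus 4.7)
The plan is to feed the uniform edge destroyer through Lemma~\ref{lem:linkdiam2}: since each edge carries probability $1/\card{E(G)}$ regardless of swaps, condition~\eqref{eq:const} holds and the bridge tree $\tiG$ of any SE~$G$ has diameter at most~$2$. If $\tiG$ is a single vertex then $G$ is bridgeless and we are done; otherwise $\tiG$ is a star with center island $I_0$ and leaf islands $I_1,\dotsc,I_k$, $k\geq 1$, and it suffices to show that every island is a singleton, whence $G = K_{1,n-1}$ and $\SC(G) = 2(n-1)$ by a direct count of its $n-1$ bridges, each of separation $2(n-1)$ and probability $1/(n-1)$.

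To rule out a non\-/singleton leaf $I_{j_0}$ (which forces $\card{I_{j_0}} \geq 3$ and $G[I_{j_0}]$ to be $2$-edge-connected), let $c\ell$ be the corresponding bridge with $c \in I_0$, $\ell \in I_{j_0}$. Pick any $I_{j_0}$-neighbor $v$ of $\ell$ (such $v$ exists by $2$-edge-connectedness) and any $u \in I_0$, and consider the swap $s \df (v,\ell,u)$, which is valid because $v\ell \in E(G)$ and $vu \notin E(G)$. The key claim is that no edge of $I_{j_0} - v\ell$ becomes a bridge of $G' \df G+s$: such a hypothetical bridge $e$ would yield a $2$-edge-cut $\set{e, v\ell}$ of $I_{j_0}$ separating sides $X \ni v$ and $Y \ni \ell$, but in $G'$ the new edge $vu$ attaches $X$ to $I_0$ while the surviving bridge $c\ell$ attaches $Y$ to $I_0$, so $X$ and $Y$ are joined through $I_0$ in $G' - e$. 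Hence the bridges of $G'$ are exactly $\set{c_j \ell_j \suchthat j \neq j_0}$, and a routine computation of relevances gives $\cost_G(v) - \cost_{G'}(v) = (n - \card{I_{j_0}})/\card{E(G)} > 0$, contradicting SE.

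The argument for $\card{I_0} = 1$ is structurally identical. Assuming the leaves are now singletons $I_j = \set{y_j}$ and $\card{I_0} \geq 2$ (hence $\geq 3$), pick any $j_0$, any $I_0$-neighbor $v$ of $c_{j_0}$, and perform the swap $(v, c_{j_0}, y_{j_0})$. The same $2$-edge-cut analysis (now inside $I_0$, with $y_{j_0}$ playing the role previously played by the external connection through $u$ and $c$) shows that no edge of $I_0 - vc_{j_0}$ becomes a bridge of $G'$, so the bridges of $G'$ are exactly $\set{c_j y_j \suchthat j \neq j_0}$, yielding $\cost_G(v) - \cost_{G'}(v) = 1/\card{E(G)} > 0$, once again contradicting SE.

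The main obstacle is the choice of the swapped\-/out edge $vw$: a naive choice (say $w$ some arbitrary $I_{j_0}$-neighbor of $v$) may turn bridges of $I_{j_0} - vw$ into bridges of $G'$ and wipe out the cost gain -- small examples such as a triangle with one pendant show that a bad choice of $w$ merely reshuffles the single bridge and leaves cost unchanged. The remedy is to pick $w$ as the attachment endpoint ($\ell$ in step one, $c_{j_0}$ in step two), because then $v$ and $w$ lie on opposite sides of every $2$-edge-cut containing $vw$, and this is exactly what lets the external connection annihilate every candidate new bridge.
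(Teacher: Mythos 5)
Your proof is correct and takes essentially the same route as the paper's: \autoref{lem:linkdiam2} gives the diameter\-/$2$ bridge tree, and the improving swap is the same one the paper uses, namely swapping a cycle edge incident to a bridge's island endpoint over to the other side so that the bridge is absorbed into a cycle. The only difference is ordering --- the paper first rules out all cyclic islands with a single uniform argument (so $G$ is a tree) and only then invokes the lemma, whereas you invoke the lemma first and then need two analogous case analyses (leaf islands, then the center); your explicit $2$\-/edge\-/cut argument for why no new bridges arise is a correct elaboration of the paper's terse ``no new bridges are introduced.''
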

\begin{proof}
  If $G$ is bridgeless, then $\SC(G) = 0$.
  So assume that $G$ contains a bridge.
  We want to show that $G$ is a tree, so for contradiction assume that $G$ is not a tree.
  Let $I$ be an island containing a cycle and let $bc$ be a bridge with $b \in I$.
  Then there is a cycle $C$ in $I$ that traverses~$b$.
  Choose $a$ so that $ab \in E(C)$.
  Then the swap $(a,b,c)$ puts the bridge $bc$ on a cycle and makes it part of the island,
  so its relevance for $a$ drops from a positive value to $0$.
  No new bridges are introduced, and the relevances of all other edges remain the same for player $a$.
  Hence this is an improving swap, a contradiction to SE.

  Since we know that $G$ is a tree, $G$ coincides with its bridge tree.
  By \autoref{lem:linkdiam2}, $\diam(G) \leq 2$.
  Since $n \geq 3$, we conclude that $G$ is a star.
  It follows $\SC(G) = 2(n-1)$.
\end{proof}

\begin{theorem}
  Let $G$ be an SE for the uniform bridge destroyer.
  Then $G$ is bridgeless or $\tiG$ is a star where each of the outer islands has exactly one vertex.\footnote{%
    If the star has only one edge and thus there are exactly two islands,
    this statement means that one of the two islands has exactly one vertex.}
  Hence $\SC(G) \leq 2(n-1) = \Oh{n}$.
\end{theorem}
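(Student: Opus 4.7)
The plan is to mimic the structure of \autoref{prop:uniform-edge-star}. The uniform bridge destroyer obviously satisfies (\ref{eq:const}), so \autoref{lem:linkdiam2} gives $\diam(\tiG)\le 2$. If $G$ is bridgeless then $\SC(G)=0$; otherwise $\tiG$ is a star with center $I_0$ and outer islands $I_1,\dots,I_k$ for some $k\ge 1$ (when $k=1$ either of the two islands may serve as the ``center''). I will argue by contradiction that every outer island is a singleton: suppose some outer island $I_1$ has $|I_1|\ge 2$, which forces $|I_1|\ge 3$ since $I_1$ is $2$-edge-connected and the graphs are simple.

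Let $b_1\in I_1$ and $c_1\in I_0$ be the endpoints of the unique bridge between these two islands, and pick $a$ to be any neighbor of $b_1$ in $G[I_1]$. Note $ac_1\notin E(G)$, since between two distinct islands there can be at most one edge. I will exhibit the swap $s=(a,b_1,c_1)$ as strictly improving for player~$a$, contradicting SE. The technical heart of the argument is the following \emph{claim: whenever $H$ is $2$-edge-connected and $ab\in E(H)$, every bridge of $H-ab$ separates $a$ from $b$.} I expect this to be the main obstacle, but it is short: if $e$ is a bridge of $H-ab$, then $H-e$ is connected (as $H$ has no bridges) while $H-ab-e$ is disconnected, so $ab$ must be a bridge of $H-e$, which puts $a$ and $b$ on opposite sides of $e$.

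Granting the claim with $H=G[I_1]$, I obtain in $G+s$ that any bridge $e$ of $G[I_1]-ab_1$ loses bridge status, because the two sides of $I_1$ separated by $e$ can still reach each other through $I_0$ via $ac_1$ and $b_1c_1$ respectively; the edges $ac_1$ and $b_1c_1$ are themselves not bridges of $G+s$ (they lie on a common cycle through $G[I_1]-ab_1$); and the bridges $b_ic_i$ for $i\ge 2$ are unaffected. Hence $|B(G+s)|=k-1$. A routine computation gives $\cost_G(a)=(2n-n_0-2n_1)/k$ and $\cost_{G+s}(a)=(n-n_0-n_1)/(k-1)$ (interpreted as $0$ when $k=1$), and the strict inequality $\cost_{G+s}(a)<\cost_G(a)$ reduces to $(k-2)(n-n_1)+n_0>0$, which always holds since $n_0\ge 1$. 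This contradicts SE, so every outer island is a singleton. Then each of the $k$ bridges $b_ic_i$ separates a single vertex from the remaining $n-1$ vertices, giving $\sep(b_ic_i)=2(n-1)$ and hence $\SC(G)=2(n-1)$.
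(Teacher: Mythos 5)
Your proposal is correct and follows essentially the same route as the paper: \autoref{lem:linkdiam2} gives the star structure of the bridge tree, and the same cycle\-/creating swap $(a,b_1,c_1)$ removes bridge status from the unique bridge to a non\-/singleton outer island, strictly improving the cost of a player in that island; your explicit claim about bridges of $H-ab$ and the cost computation merely make rigorous what the paper delegates to the proof of \autoref{prop:uniform-edge-star}. One cosmetic point: the reduction to $(k-2)(n-n_1)+n_0>0$ comes from multiplying by $k(k-1)$ and is only valid for $k\ge 2$ (at $k=1$ that expression equals $0$, since $n-n_1=n_0$), but you already cover $k=1$ separately via $\cost_{G+s}(a)=0<n_0=\cost_G(a)$.
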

\begin{proof}
  If $G$ is bridgeless, then $\SC(G) = 0$.
  So assume that $G$ contains a bridge.
  By \autoref{lem:linkdiam2}, $\tiG$ is a star.
  Let $I$ be an island that is not the center of the star (\ie, it is an outer island)
  and that contains more than one vertex.
  Then $I$ contains a cycle.
  By a swap as in the proof of \autoref{prop:uniform-edge-star},
  the one bridge $e$ between the center of the star and $I$ can be put on a cycle.
  For the players in $I$, this is a strict improvement since for them, $e$ had the strictly highest relevance of all bridges in~$G$.
  The statement on the social cost follows since only one vertex can be separated from the rest of the graph
  by the removal of a bridge.
\end{proof}

\section{Extreme Edge Destruction}

For $G \in \cG_n$ denote $\sepmax(G) \df \max_{e \in E(G)} \sep(e)$ and
\begin{equation*}
  \Emax(G) \df \set{e \in E(G) \suchthat \sep(e) = \sepmax(G)} \period
\end{equation*}
We call the edges in $\Emax(G)$ the \term{max-sep edges}.
Recall $\sep(e) = 2 \nn(e) (n-\nn(e))$ and note that $x \mapsto x (n-x)$ is strictly increasing on $[0,n/2]$,
hence $\nn(e) = \nn(e')$ for all $e,e' \in \Emax(G)$.
Moreover if $\nn(e) = \nn(e')$ for some $e \in \Emax(G)$ and $e' \in E(G)$, then $e' \in \Emax(G)$.
In other words: exactly all the edges with maximum $\nn(e)$ are max-sep.

An edge destroyer $\cD$ is called the \term{extreme edge destroyer} if $\cD_G(e) = \frac{1}{\card{\Emax(G)}}$
for each $e \in \Emax(G)$ and $\cD_G(e) = 0$ for each $e \in E(G) \setminus \Emax(G)$.

\begin{theorem}
  Let $G$ be an SE under the extreme edge destroyer.
  Then $G$ is bridgeless or $\tiG$ is a star where each of the outer islands has exactly one vertex.
  Hence $\SC(G) \leq 2(n-1) = \Oh{n}$.
\end{theorem}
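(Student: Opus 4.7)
The plan is to mirror the two\-/step strategy of \autoref{prop:uniform-edge-star} and the uniform\-/bridge theorem (first rule out non\-/singleton outer islands by a swap from inside that island, then force $\tiG$ to be a star), adapted to the fact that the extreme edge destroyer does not satisfy property~(\ref{eq:const}): the support $\Emax(\cdot)$ may gain, lose, or rescale after a swap, and so \autoref{lem:linkdiam2} is not directly available. If $G$ is bridgeless the claim is trivial, so assume $G$ has at least one bridge.

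For the first step, let $I$ be an outer island with $|I|\geq 3$. Then $I$ is 2\-/edge\-/connected with a cycle through the endpoint $b\in I$ of the unique bridge $e=bc$ from~$I$; pick $a\in I$ with $ab$ on such a cycle. If $e\in\Emax(G)$, the cycle swap $(a,b,c)$ of \autoref{prop:uniform-edge-star} turns $bc$ into a non\-/bridge and removes the dominant term $\rel_G(e,a)=n-|I|$ from $\cost_G(a)$; since every other bridge $e'$ has $\rel_G(e',a)\leq n-|I|-1$, averaging over the new $\Emax(G+s)$ yields $\cost_{G+s}(a)<\cost_G(a)$. If $e\notin\Emax(G)$, there is a max\-/sep bridge $e^*$ elsewhere in $\tiG$, and a swap from~$a$ onto a vertex on $e^*$'s far side replaces $ab$ by a new inter\-/island shortcut that simultaneously dissolves $e^*$ and every bridge on the path from~$I$ to $e^*$ in $\tiG$; the resulting $\cost_{G+s}(a)$ is the average relevance of the surviving max\-/sep bridges (all lying off the dissolved chain), again strictly below $\cost_G(a)$. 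Hence $|I|\geq 3$ is ruled out, and since an island in a simple graph cannot have exactly $2$ vertices, $I$ must be a singleton.

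For the second step, suppose for contradiction that $\tiG$ contains a path $(I_0,I_1,I_2,I_3)$ of length~3. By the first step $I_0,I_3$ are singletons. If $I_1$ or $I_2$ is non\-/singleton, a cycle swap inside it dissolving the adjacent max\-/sep bridge (typically the one between $I_1$ and $I_2$) strictly improves by the same averaging argument, so all four islands are singletons and $\tiG=G$. In this tree setting I adapt the argument of \autoref{lem:linkdiam2}: for player $a=I_0$ I test the family of swaps $(a,I_1,I_j)$ with $j\geq 2$, analyze in each case the separations of the affected bridges (those on the path from $I_1$ to $I_j$ in $\tiG$) and the resulting rescaling of $\Emax$, and conclude that at least one swap strictly lowers $\cost_G(a)$. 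Combined with the symmetric argument from the $I_3$ side this is incompatible with SE.

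The main obstacle is the bookkeeping for the destroyer's support shift: costs cannot be compared edge by edge but only after rescaling by $|\Emax(\cdot)|$. I expect the argument to close because each improving swap either dissolves the swapping player's dominant\-/relevance bridge or collapses a whole chain of bridges via a shortcut, and in both settings the newly created bridges (if any) stay inside the relevant ex\-/island and have relevance at most $|I|-1$ for the swapping player. Once $\tiG$ is a star with singleton outer islands, every bridge has $\nn=1$, giving $\sepmax(G)=2(n-1)$ and $\SC(G)=2(n-1)=\Oh{n}$.
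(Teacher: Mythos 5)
Your sub\-/case where the bridge $e=bc$ leaving the big outer island $I$ is itself max\-/sep is essentially fine: the swap amounts to subdividing $ab$ by $c$, so no new bridges arise, all other separations are preserved, and deleting the strict relevance\-/maximum from the support lowers the average. The two places where the actual difficulty of the theorem sits are both gapped, however. First, in the sub\-/case $e \notin \Emax(G)$ your long\-/range swap need not improve. Example: let $I$ be a triangle containing $a$, joined by a bridge to a singleton island $\set{c}$, and let $c$ be joined by bridges to two $2$\-/edge\-/connected blocks $J_2,J_3$ of $48$ vertices each, so $n=100$. Then $\Emax(G)$ consists of the two bridges at $J_2$ and $J_3$, and $\cost_G(a)=48$. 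Your swap onto a vertex of $J_2$ dissolves the chain up to and including the $J_2$\-/bridge; the surviving $J_3$\-/bridge keeps its separation and becomes the unique max\-/sep edge, so $\cost_{G+s}(a)=48$ again -- no strict improvement. The structural reason is that after dissolving some max\-/sep bridges, the new support consists of surviving bridges whose relevance for $a$ is not controlled by $\cost_G(a)$: a bridge with smaller $\nn$ can have larger relevance for $a$ when $a$ lies on its small side. The paper avoids this by always swapping from inside the \emph{minimal component} $K_1$ of a max\-/sep edge toward the central island of the star that $\Emax(G)$ forms in $\tiG$; this shrinks $\nn(e_1)$ by one and keeps the evolution of $\Emax$ under control, rather than deleting edges from the support wholesale.

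Second, your step 2 is announced but not carried out, and it is exactly where the theorem is hard. When $\card{\Emax(G)}=1$, any swap that lowers the separation of the unique max\-/sep edge leaves you with no reference edges: the new support can consist of previously second\-/tier bridges and, worse, of the newly inserted edge itself. The paper's Case 2.1 has to deal with precisely this ($av$ entering $\Emax'$ when $\card{K_1}=2$), and the comparison there comes down to an explicit computation $\cost'(a)=\frac{n-2}{k}+1$ versus $\cost(a)=n-2$ plus a separate argument for $n=4$. Nothing in your plan to ``test the family of swaps $(a,I_1,I_j)$ and conclude that at least one is improving'' replaces this; as written it is a statement of intent, and the same applies to the ``typically'' in your treatment of non\-/singleton internal islands. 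The sound core of your proposal (the $e\in\Emax(G)$ sub\-/case) can be kept, but the remaining cases need to be reorganized around the minimal components of the max\-/sep edges as in the paper, or you need a genuinely new bound on how $\Emax$ rescales under chain\-/dissolving swaps.
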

\begin{proof}
  The expression for the social cost follows from the structural statement.
  Assume for contradiction that $G$ contains bridges and is not of the stated form.
  
  \textbf{Case 1:} $\Emax = \set{\eli{e}{k}}$ with $k \geq 2$.
  By \cite[Prop.~9.1]{Kli11a}, the max-sep edges form a star in the bridge tree.
  For each $i \in \setn{k}$ denote $K_i \subseteq V_n$ the (unique) minimal component of $G-e_i$,
  and denote $K_0$ the island at the center of the star formed by $\Emax$.
  Then $V_n = \dotbigcup_{i=0}^k K_i$ and $\card{K_i} = \card{K_j}$ for all $i,j \in \setn{k}$.
  By assumption, $\card{K_i} \geq 2$ for each~$i$.
  
  \textbf{Case 1.1:} There is a leaf $a \in K_1$.
  Denote $b$ the neighbor of $a$ and let $v \in K_0$.
  Define $G' \df G + (a,b,v)$.
  When moving from $G$ to $G'$, all the edges $\elix{e}{2}{k}$ and the edges in $G[K_2],\hdots,G[K_k]$
  have their separation maintained.
  Edges in $G[K_1]$ have their separation maintained or reduced since their minimal\-/component size reduces.
  The edge $e_1$ has its separation reduced since its minimal\-/component size reduces.
  The new edge $av$ cannot become max\-/sep since $\nn(av) = 1$ while $\nn(e_2) \geq 2$.
  It follows that $\Emax' = \set{\elix{e}{2}{k}}$.
  We have $\rel'(e_i, a) = \rel(e_i, a) < \rel(e_1, a)$ for all $i \geq 2$.
  Hence $\cost'(a) < \cost(a)$, a contradiction to SE.

  \textbf{Case 1.2:} There is a cycle $C$ in $K_1$.
  Let $ab \in E(C)$ and $v \in K_0$.
  By essentially the same arguments as in Case~1.1, we show that player $a$ improves by the swap $(a,b,v)$.

  \textbf{Case 2:} $\Emax = \set{e_1}$.
  This case is more difficult since after reducing the separation of $e_1$,
  we have no other max\-/sep edges that could act as a reference.
  Denote $K_1$ a component of $G-e_1$ with minimum size (if both components of $G-e_1$ have the same size, then pick one arbitrarily)
  and denote $K_0$ the island containing the endpoint of $e_1$ that is not in~$K_1$.
  
  \textbf{Case 2.1:} There is a leaf $a \in K_1$.
  Denote $b$ the neighbor of $a$ and let $v \in K_0$.
  Define $G' \df G + (a,b,v)$.
  We have $\nn'(e_1) = \nn(e_1) - 1$, so $\sep'(e_1)$ has the next lower possible value below $\sep(e_1)$.
  Hence $\Emax' = \set{e_1, e_2, \hdots, e_k}$ for zero or more additional edges $\elix{e}{2}{k}$.
  Since they all have the same minimum\-/component size, they cannot be in $G[K_1]$,
  hence they form a star with $K_0$ as the center (this is the only way that they can form a star in the bridge tree).
  If $av \not\in \Emax'$, then $\cost'(a) = \nn(e_1) - 1 < n - \nn(e_1) = \cost(a)$, hence the swap is an improvement.
  If $av \in \Emax'$, then $\nn(e_1) - 1 = \nn'(av) = 1$, so $\card{K_1} = 2$ and $n \geq 4$.
  Moreover, $k \geq 2$.
  It follows $\cost'(a) = \frac{1}{k} (n-1 + (k-1) (\nn(e_1) - 1)) = \frac{n-2}{k} + 1$.
  On the other hand $\cost(a) = n-2$.
  If $n \geq 5$ or $k \geq 3$, this implies an improvement.
  The remaining case of $n = 4$ and $k=2$ is impossible since for such $n$, the graph $G'$ is a star and thus $k=3$.

  \textbf{Case 2.2:} There is a cycle $C$ in $K_1$.
  We consider a swap like the one in Case 1.2.
  It is not difficult to see that $\rel'(e,a) < \rel(e_1,a)$ for each $e \in \Emax'$,
  hence the swap is an improvement.
\end{proof}

\section{Uniform and Degree-Proportional Vertex Destruction}

A vertex destroyer $\cD$ is called the \term{uniform vertex destroyer}
if $\cD_G(u) = \frac{1}{n}$ for each $u \in V_n$ and each $G \in \cG_n$.

For any connected graph $G=(V,E)$, denote $\cB(G)$ the set of its blocks
and $A(G) \subseteq V$ the set of its cutvertices (also known as \term{articulation points}).
Denote $\hG$ the block-cutvertex tree of $G$,
that is, $V(\hG) = \cB(G) \dotcup A(G)$ and each edge in $\hG$
runs between a block and a cutvertex, namely $Bv \in E(\hG)$
if $B \in \cB(G)$ and $v \in B \cap A(G)$.
We have $\card{B} \geq 2$ for each $B \in \cB(G)$.
If $\card{B} \geq 3$, then $G[B]$ is two\-/connected;
we also say that $B$ is two\-/connected in~$G$.
Recall also that in a two\-/connected graph, for each vertex $v$ we can find a cycle that visits $v$.

The following remark is proved by standard arguments, which are included here for completeness.

\begin{remark}
  \label{rem:two-paths}
  Let $G=(V,E)$ be any two\-/connected graph.
  \begin{enumerate}[label=(\roman*)]
  \item\label{rem:two-paths:i}
    Let $x,y,v \in V$ be three distinct vertices.
    Then there exist paths $P=(v,\hdots,x)$ and $Q=(v,\hdots,y)$ with $V(P) \cap V(Q) = \set{v}$.
  \item\label{rem:two-paths:ii}
    Let $W \subseteq V$ with $\card{W} \geq 2$ and $v \in V \setminus W$.
    Then there are $x,y \in W$ with $x \neq y$
    and paths $P = (v,\hdots,x)$ and $Q = (v,\hdots,y)$
    with $V(P) \cap V(Q) = \set{v}$ and $V(P) \cap W = \set{x}$ and $V(Q) \cap W = \set{y}$.
  \end{enumerate}
\end{remark}
\begin{proof}
  \ref{rem:two-paths:i}
  Add a new vertex $z$ to the graph and connect it with $x$ and with $y$.
  The resulting graph is again two\-/connected.
  Using the global version of Menger's theorem (see, \eg, \cite[Thm.~3.3.6]{Die05})
  we find two independent (that is, internally vertex\-/disjoint) $v$-$z$ paths.
  Taking subpaths yields the result.

  \ref{rem:two-paths:ii}
  Let $x',y' \in W$, $x' \neq y'$ be any two distinct vertices in $W$.
  By \ref{rem:two-paths:i}, we find $P'=(v,\hdots,x')$ and $Q'=(v,\hdots,y')$ with
  \begin{equation}
    \label{eq:paths-disjoint}
    V(P') \cap V(Q') = \set{v} \period
  \end{equation}
  Let $x$ be the first vertex on $P'$ that is also in $W$.
  Define $P \df (v,\hdots,x)$ as a subpath of $P'$.
  Likewise, let $y$ be the first vertex on $Q'$ that is also in $W$ and define $Q \df (v,\hdots,y)$ as a subpath of $Q'$.
  By \eqref{eq:paths-disjoint} we get $x \neq y$, and the other properties follow from the choice of $x$ and $y$.
\end{proof}

\begin{definition}
  \label{def:extension}
  Let $G=(V,E)$ be any graph and $(B_1,b_1,\hdots,B_k,b_k,B_{k+1})$, $k \geq 1$, be a path in its block\-/cutvertex tree $\hG$.
  Assume $\card{B_1} \geq 3$ and let $C=(b_1,a,\hdots,b_1)$ be a cycle in $B_1$.
  Let $c \in B_{k+1} \setminus \set{b_k}$.
  Then we call the swap $(a,b_1,c)$ a \term{cycle extension} with respect to $(B_1,B_{k+1})$;
  note that $\elix{B}{2}{k}$ and $\eli{b}{k}$ are uniquely determined by the pair $(B_1,B_{k+1})$ since $\hG$ is a tree.
\end{definition}

The name \enquote{cycle extension} is chosen since the cycle $C$ is extended into a larger cycle,
thereby merging the blocks that are traversed by the new cycle.
The merging property is proved in the next proposition.

\begin{proposition}
  \label{prop:merge}
  With notation as in \autoref{def:extension}, denote $G' \df G + (a,b_1,c)$.
  Then in $G'$, all the blocks $\eli{B}{k+1}$ are merged into one block and the remaining blocks are maintained;
  in particular, no new cutvertices emerge in~$G'$ and the separation values of maintained cutvertices do not increase.
\end{proposition}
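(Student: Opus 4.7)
My plan is to construct an explicit ``super\-/cycle'' $\tilde{C}$ in $G'$ that visits all of $B_1,\hdots,B_{k+1}$, and then to use it to read off the new block structure. Since $B_1$ is two\-/connected (because $\card{B_1} \geq 3$) and $C$ is a cycle in $B_1$, the path $P_1 \df C - ab_1$ joins $a$ to $b_1$ in $G'[B_1]$. For each $i \in \set{2,\hdots,k+1}$, with the convention $b_{k+1} \df c$, the block $B_i$ is connected and contains both $b_{i-1}$ and $b_i$, so I can pick a path $P_i$ inside $G[B_i]$ from $b_{i-1}$ to $b_i$. Concatenating $P_1,\hdots,P_{k+1}$ and closing with the new edge $ca$ yields a cycle $\tilde{C}$; consecutive $P_i$ and $P_{i+1}$ share only the cutvertex $b_i$, and no other vertex can repeat, since distinct blocks of $G$ share at most one vertex.

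Next I would prove that $G'[W]$ is two\-/connected, where $W \df B_1 \cup \hdots \cup B_{k+1}$. Connectedness is immediate because $B_1-ab_1$ is still connected. For the absence of a cutvertex $x \in W$ in $G'[W]$, I would case\-/analyze: (i) if $x = b_i$ for some $i$, the edge $ac$ together with $P_1,\hdots,P_{i},P_{i+1},\hdots,P_{k+1}$ minus the relevant piece routes around $b_i$; (ii) if $x \in \set{a,c}$, both swapped edges disappear with $x$ and $x$ was already not a cutvertex of $G[W]$; (iii) otherwise $x$ lies in a unique $B_j$, so a route from $a$ to $b_1$ still survives in $G'[W]-x$ via $P_2,\hdots,P_{k+1}$ and $ca$, and inside each individual block we have 2\-/connectedness (or a bridge between the two neighbouring cutvertices) to cope with $x$ being interior to that block. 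Two\-/connectedness of $G'[W]$ then implies that all of $W$ collapses into a single block of $G'$.

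Outside $W$, little changes. The swap only deletes an edge inside $B_1$ and adds $ac$, whose endpoints both lie in $W$, so for any block $B^*$ of $G$ with $B^* \notin \set{B_1,\hdots,B_{k+1}}$ we have $G'[B^*] = G[B^*]$ unless both $a$ and $c$ lie in $B^*$; in that (already two\-/connected) case adding the chord $ac$ does not affect block membership, so $B^*$ remains a block of $G'$. For the ``no new cutvertices'' claim I would use a path\-/replacement lemma: for every $u$ and every pair $x,y$ connected in $G-u$, they are still connected in $G'-u$, because any use of the missing edge $ab_1$ along an $x$-$y$ walk can be rerouted inside $G'[W]-u$ when $u \in W \setminus \set{a,b_1}$ (by the two\-/connectedness of $G'[W]$) and along the intact $\tilde{C}$ when $u \notin W$; the cases $u \in \set{a,b_1}$ are trivial because then $ab_1 \notin E(G-u)$.

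Finally, this same path\-/replacement shows that for every cutvertex $u$ maintained from $G$ to $G'$, the partition of $V_n \setminus \set{u}$ into components of $G'-u$ is a coarsening of the one induced by $G-u$. Using \eqref{eq:sep-beta} and the elementary inequality $(\bet+\bet')^2 \geq \bet^2 + {\bet'}^2$ for $\bet,\bet' \geq 0$, such coarsening can only increase $\sum_i \bet_i^2$ and therefore can only decrease $\sep(u)$. I expect the main technical obstacle to be case~(iii) of the cutvertex analysis: when $x \in B_1 \setminus \set{a,b_1}$, removing $x$ may well turn $ab_1$ into a bridge of $G[B_1]-x$, and the whole argument hinges on having $\tilde{C}$ available as a back\-/up path through $c$ to salvage connectedness in $G'[W]-x$.
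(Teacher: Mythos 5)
Your proposal is correct and shares the paper's central construction: the cycle $\tilde{C}$ you build is exactly the cycle $C'$ in the paper's proof, and in both arguments everything rests on this cycle traversing all of $B_1,\hdots,B_{k+1}$ after the swap. Where you diverge is in how two-connectedness of $W = B_1\cup\cdots\cup B_{k+1}$ is verified: the paper uses the ``two independent paths between every pair of vertices'' characterization, attaching each off-cycle vertex to $C'$ by two disjoint paths via a Menger-type fan lemma (its \autoref{rem:two-paths}), whereas you use the ``no cutvertex'' characterization with a case analysis on the deleted vertex $x$. Both work; your route avoids Menger but relocates the same difficulty (that $B_1$ need not stay two-connected after losing $ab_1$) into your case~(iii), which you correctly identify and resolve by the detour through $c$. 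Two small repairs are needed. In case~(ii), for $x=c$ it is not true that both swapped edges disappear with $x$: the deleted edge $ab_1$ is not incident with $c$, so you still need that $G[B_1]-ab_1$ is connected --- which holds because $ab_1$ lies on the cycle $C$ inside the two-connected block $B_1$. In case~(i), observing that $\tilde{C}-b_i$ is a path is not by itself enough; you must also reattach the vertices of $W\setminus V(\tilde{C})$, which follows since each block $B_j$ with $\card{B_j}\geq 3$ remains connected after deleting $b_i$ and still meets $\tilde{C}-b_i$. Your explicit path-replacement lemma for ``no new cutvertices'' and the coarsening argument for the separation values (via \eqref{eq:sep-beta}) are correct and in fact more detailed than the paper, which treats these consequences as immediate.
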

\begin{proof}
  The only non\-/obvious part is that $B' \df B_1 \cup \hdots \cup B_{k+1}$ is two\-/connected in $G'$,
  which we will prove now.
  Denote $C=(b_1,a,\eli{a}{t},b_1)$ for some $t$.
  There is a cycle of the form $C' = (b_1,\hdots,b_k,\hdots,c,a,\eli{a}{t},u_1)$
  that starts in~$B_1$, runs through $\elix{B}{2}{k+1}$ and finally re\-/enters $B_1$.

  Let $i \in \setn{k+1}$ with $\card{B_i} \geq 3$ and $v \in B_i \setminus V(C')$.
  \textbf{Claim:} in $G'$, there are paths $P=(v,\hdots,x)$ and $Q=(v,\hdots,y)$ with $x,y \in V(C')$ and $x \neq y$
  such that $V(P) \cap V(Q) = \set{v}$ and $V(P) \cap V(C') = \set{x}$ and $V(Q) \cap V(C') = \set{y}$.

  \textbf{Proof of Claim:} Denote $W \df V(C') \cap B_i$, then $\card{W} \geq 2$.
  By \autoref{rem:two-paths}\ref{rem:two-paths:ii} applied with $W$ as defined here
  the claim is clear for $i \geq 2$, since such $B_i$ is two\-/connected in $G'$ (and in~$G$).
  Hence we consider $i=1$.
  The difficulty is that $B_1$ may not be two\-/connected in $G'$ since we removed the edge $a_1u_1$.
  However, none of the paths $P$ or $Q$ guaranteed to exist by \autoref{rem:two-paths}\ref{rem:two-paths:ii} in~$G$
  can use $a_1u_1$ since then that path would have more than one vertex in common with~$W$.
  This concludes the proof of the claim.

  Now let $v, w \in B'$ with $v \neq w$.
  We show that in $G'$, there are two independent $v$-$w$ paths.
  \begin{compactitemize}
  \item If $v,w \in B_i$ for some $i \leq k$,
    then either $B_i$ is two\-/connected and the statement is clear,
    or $\card{B_i}=2$ in which case $v,w \in V(C')$ and the independent paths are given through $C'$.
  \item Let $v \in B_i$ and $w \in B_j$ for $i \neq j$.
    If $v$ or $w$ is located on $C'$, then nothing has to be done for that vertex;
    otherwise we connect it with $C'$ via the paths guaranteed by the claim.
    It is easy to see that this gives two independent $v$-$w$ paths.
  \item Let $v,w \in B_1$.
    In $G$, we find two independent $v$-$w$ paths $P$ and $Q$ in $B_1$.
    At most one of them, say $P$, uses $ab_1$.
    Instead of using that edge we can, starting at $b_1$,
    run along $C'$ until we reach $a$.
    That $b_1$-$a$ path runs outside of $B_1$ (except for $a$ and $b_1$)
    and thus will not interfere with $P$ or $Q$.
    \qedhere
  \end{compactitemize}
\end{proof}

\begin{theorem}
  \label{prop:uniform-vertex-tree}
  An SE for the uniform vertex destroyer is two\-/connected (that is, it has only one block)
  or it does not contain any cycle and thus is a tree.
\end{theorem}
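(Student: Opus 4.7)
We argue by contradiction using the cycle extension of \autoref{def:extension} together with its merging property \autoref{prop:merge}. Assume $G$ is an SE that is not two\-/connected but contains a cycle. Any cycle lies inside a block, so some block $B_1$ is two\-/connected, in particular $\card{B_1} \geq 3$. Since $G$ has at least two blocks and is connected, $B_1$ contains a cutvertex $b_1$ of $G$; let $B_2$ be a second block containing $b_1$. Two\-/connectivity of $B_1$ gives a cycle through $b_1$ in $B_1$, so we pick $a \in B_1$ with $ab_1$ on such a cycle and pick $c \in B_2 \setminus \set{b_1}$. Set $G' \df G + (a,b_1,c)$; by \autoref{prop:merge} applied with $k=1$, the blocks $B_1,B_2$ merge into a single block of $G'$, all other blocks are maintained, and no new cutvertices emerge.

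The target is $\cost'(a) < \cost(a)$, which contradicts SE. Under the uniform destroyer, $\cost(a) - \cost'(a) = \frac{1}{n} \sum_{u \neq a} (\rel(u,a) - \rel'(u,a))$ because $\rel(a,a) = n-1 = \rel'(a,a)$. Writing $C_u(a)$ for the vertex set of $a$'s component in the subgraph of $G$ induced by $V_n \setminus \set{u}$ (and $C'_u(a)$ analogously in $G'$), we have $\rel(u,a) = n - \card{C_u(a)}$ and $\rel'(u,a) = n - \card{C'_u(a)}$ for $u \neq a$. It therefore suffices to show $C_u(a) \subseteq C'_u(a)$ for every $u \neq a$, with strict inclusion for at least one $u$.

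Strict inclusion at $u = b_1$ is immediate: in $G - \set{b_1}$ the cutvertex $b_1$ separates $a \in B_1$ from $c \in B_2$, while in $G' - \set{b_1}$ the new edge $ac$ places them in the same component. For every other $u \neq a$, pick $v \in C_u(a)$ and an $a$-to-$v$ path $P$ in $G - \set{u}$. If $P$ avoids $ab_1$ it already lies in $G' - \set{u}$. Otherwise we construct a substitute $a$-to-$v$ walk in $G' - \set{u}$ of the form $a \to c \to \dots \to b_1 \to \dots \to v$, where the initial edge $ac$ is the new one, the middle $c$-to-$b_1$ segment stays inside $B_2 - \set{u}$ (connected because $B_2$ is a block and at most one vertex $u \in B_2 \setminus \set{b_1,c}$ needs to be avoided, using two\-/connectivity of $B_2$ when $\card{B_2} \geq 3$), and the tail from $b_1$ to $v$ is copied from $P$. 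The one corner case is $u = c$; then $ac$ is unavailable, but since $c \notin B_1$ (as $B_1 \cap B_2 = \set{b_1}$ and $c \neq b_1$), the cycle through $ab_1$ lies entirely inside $B_1$ and survives in $G - \set{c}$, so removing $ab_1$ does not disconnect $a$ from $b_1$ and hence $C'_c(a) = C_c(a)$.

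The main obstacle will be the final case analysis, in particular the corner case $u = c$: there the new edge $ac$ is unavailable in $G' - \set{c}$, and one must instead invoke two\-/connectivity of $B_1$ to re\-/route around the removed edge $ab_1$.
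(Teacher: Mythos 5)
Your proposal is correct and takes essentially the same route as the paper: perform a cycle extension from the two\-/connected block $B_1$ into a second block, invoke \autoref{prop:merge}, and conclude that player $a$'s relevance for the cutvertex $b_1$ strictly drops while no other relevance increases, contradicting SE. The only differences are cosmetic: you specialize to an adjacent block ($k=1$) and spell out the component\-/inclusion argument that the paper leaves as an assertion.
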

\begin{proof}
  Let an SE graph $G \in \cG_n$ be given and assume it contains more than one block.
  We only need to prove that no block has a cycle, or, equivalently, that each block consists of only two vertices.
  Suppose for contradiction that $B_1$ is a block with $\card{B_1} \geq 3$ and let $B_2$ be another block.
  Let $(a,b,c)$ be a cycle extension with respect to $(B_1,B_2)$ and $G' \df G+(a,b,c)$.
  By \autoref{prop:merge}, it follows that $\rel'(b,a) < \rel(b,a)$ since in $G'$, removal of $b$
  cannot cut $a$ off the one or more vertices in $B_2 \setminus \set{b_1}$ anymore.
  (Recall that a block always contains at least two vertices.)
  All other relevances for $a$ are maintained or also reduced.
  Therefore, we have an improvement in the cost of player $a$, contradicting stability of~$G$.
\end{proof}

\begin{corollary}
  Let $G \in \cG_n$ be an SE for the uniform vertex destroyer.
  Then $G$ is either two\-/connected or a star, hence $\SC(G) = 2(n-1) = \Oh{n}$ or $\SC(G) = 3n-5+\frac{2}{n} = \Oh{n}$.
\end{corollary}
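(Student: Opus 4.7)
The plan is to handle the two cases permitted by \autoref{prop:uniform-vertex-tree} separately. If $G$ is two\-/connected, removing any single vertex leaves a connected graph, so $\sep(u) = 2(n-1)$ for each $u \in V_n$ (only the destroyed vertex is isolated from the rest), and $\SC(G) = 2(n-1)$. For the star with center $c$, the formula \eqref{eq:sep-beta} gives $\sep(c) = n(n-1)$ and $\sep(\ell) = 2(n-1)$ for each leaf $\ell$, so $\SC(G) = \frac{1}{n}\bigl(n(n-1) + (n-1)\cdot 2(n-1)\bigr) = 3n - 5 + \frac{2}{n}$. The real work is to show that any tree SE is a star.

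I would first specialize the cost function to a tree $T$. Rooting $T$ at a player $a$ and using uniqueness of paths in trees, $\rel(u,a) = \card{T_u}$ for $u \neq a$, where $T_u$ denotes the subtree rooted at $u$; and $\rel(a,a) = n-1$. A standard double\-/counting argument yields $\sum_{u \neq a} \card{T_u} = \sum_{w \in V_n} \mathrm{dist}_T(a,w)$, which I denote $D(a)$. Hence $\cost(a) = \frac{1}{n}\bigl((n-1) + D(a)\bigr)$; that is, each player wants to minimize its distance sum. For a leaf $v_0$ with unique neighbor $v_1$, the swap $(v_0,v_1,c)$ produces the tree $T - v_0v_1 + v_0c$ in which $v_0$'s distance sum equals $(n-1) + D_{T-v_0}(c)$, so the swap strictly improves $v_0$'s cost iff $D_{T-v_0}(c) < D_{T-v_0}(v_1)$.

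Now assume, for contradiction, that a tree SE $T$ is not a star, so its diameter is at least $3$, and choose a longest path $(v_0,v_1,\hdots,v_k)$ with $k \geq 3$. Both endpoints are leaves of $T$, and any neighbor of $v_1$ other than $v_0$ and $v_2$ is itself a leaf (else the path could be extended). Let $\ell_1$ be the number of leaf neighbors of $v_1$ in $T$ (including $v_0$). For the specific swap $(v_0,v_1,v_2)$, moving from $v_1$ to the adjacent vertex $v_2$ along the edge $v_1v_2$ inside $T-v_0$ changes the distance sum by $\ell_1 - ((n-1) - \ell_1) = 2\ell_1 - (n-1)$, because the $v_1$\-/side of that edge has $\ell_1$ vertices and the $v_2$\-/side has $(n-1)-\ell_1$ vertices. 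Ruling out a strict improvement at $v_0$ therefore forces $\ell_1 \geq (n-1)/2$, and the symmetric argument at the other end of the path gives $\ell_{k-1} \geq (n-1)/2$. Since the leaf neighbors of $v_1$ and those of $v_{k-1}$ are disjoint, and both sets are disjoint from the $k-1$ interior vertices $v_1,\hdots,v_{k-1}$ (each of degree at least $2$), we obtain $n \geq \ell_1 + \ell_{k-1} + (k-1) \geq (n-1) + (k-1)$, forcing $k \leq 2$ and contradicting $k \geq 3$. Hence $T$ has diameter at most $2$ and, because $n \geq 3$, is a star. The main obstacle is really just the disjointness bookkeeping in the pigeonhole step; once the tree cost expression $\cost(a) = \frac{1}{n}\bigl((n-1) + D(a)\bigr)$ is in place, the swap computation itself is a one\-/line application of the classical formula for how a tree's distance sum changes along an edge.
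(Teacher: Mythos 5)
Your proof is correct, and its overall skeleton matches the paper's: reduce to a tree via \autoref{prop:uniform-vertex-tree}, show the tree has diameter at most~$2$, conclude it is a star since $n \geq 3$, and compute the two social-cost values (both of your computations agree with the paper's). The one place where you genuinely diverge is the diameter argument. The paper simply re-runs the argument of \autoref{lem:linkdiam2} in the vertex setting: along an arbitrary path $(a,b,c,d)$ in the tree, the swaps $(a,b,c)$ and $(d,c,b)$ yield $n_c + n_d \leq n_b$ and $n_a + n_b \leq n_c$ for the four subtree sizes, and adding these gives the contradiction $n_a + n_d \leq 0$. You instead reformulate cost on a tree as $\cost(a) = \frac{1}{n}\parens[big]{(n-1) + D(a)}$ via $\sum_{u \neq a} \card{T_u} = \sum_w \dist(a,w)$, restrict to a longest path so that the end vertices are leaves and the neighbors of $v_1$ other than $v_2$ are leaves, apply the classical edge-shift formula for distance sums to force $\ell_1, \ell_{k-1} \geq (n-1)/2$, and finish by a pigeonhole count over disjoint vertex sets. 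The inequalities are in fact the same in disguise (with $v_0$ a leaf, $n_c + n_d \leq n_b$ is exactly $\ell_1 \geq (n-1)/2$), so the two proofs are close cousins; yours buys a self-contained, purely metric derivation, at the price of needing the longest-path and leaf structure to close the argument, whereas the paper's version works on any length-$3$ path and closes by just summing the two inequalities. Both are valid.
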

\begin{proof}
  Let $G$ be non\-/two\-/connected.
  Then by \autoref{prop:uniform-vertex-tree}, $G$ is a tree.
  By applying the same argument as in proof of \autoref{lem:linkdiam2},
  we obtain that the diameter of $G$ is at most $2$, \ie $G$ is a star.
  The social cost of star is easily computed to be $\frac{1}{n}\parens[big]{(n-1)(3n-4)+2(n-1)}=3n-5+\frac{2}{n}$.
\end{proof}

A destroyer $\cD$ is called the \term{degree\-/proportional vertex destroyer}
if $\cD_G(u) = \frac{\deg_G(u)}{2m}$ for each $u \in V_n$
and each $G \in \cG_n$.

\begin{proposition}
  The star is an SE for the degree\-/proportional vertex destroyer,
  and its social cost is $\frac{1}{2}(n^2+n)-1 = \Ohm{n^2}$.
\end{proposition}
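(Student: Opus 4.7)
The plan is to handle the two assertions separately: first compute the social cost by plugging into the formula $\SC(G) = \sum_{u} \sep(u)\,\cD_G(u)$, and then verify stability by checking every possible deviation.

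For the social cost, let $c$ denote the center and $v_1,\dots,v_{n-1}$ the leaves. The graph has $m = n-1$ edges, so $2m = 2(n-1)$. Then $\deg(c) = n-1$ and $\deg(v_i)=1$, hence $\cD_G(c)=\tfrac{1}{2}$ and $\cD_G(v_i)=\tfrac{1}{2(n-1)}$. Using \eqref{eq:sep-beta}: removing $c$ leaves $n-1$ singletons, giving $\sep(c) = n^2-1-(n-1) = n(n-1)$; removing a leaf leaves a single component of size $n-1$, giving $\sep(v_i) = n^2-1-(n-1)^2 = 2(n-1)$. Plugging in,
\begin{equation*}
  \SC(G) = \tfrac12\,n(n-1) + (n-1)\cdot 2(n-1)\cdot \tfrac{1}{2(n-1)} = \tfrac{n(n-1)}{2} + (n-1) = \tfrac{(n-1)(n+2)}{2} = \tfrac12(n^2+n)-1 \period
\end{equation*}

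For stability, I would enumerate all deviations. The center $c$ is already adjacent to every other vertex, so no triple $(c,v_i,u)$ satisfies the swap condition $cu \not\in E(G)$; removing any incident edge disconnects the graph and yields cost $\infty$. Thus no deviation at $c$ is improving. A leaf $v$ may either remove its single edge (again producing $\infty$) or perform a swap $(v,c,u)$ with $u$ another leaf. This last case is the only one that requires a real computation, and is the main obstacle.

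For the leaf swap $(v,c,u)$, write $G' \df G + (v,c,u)$. The edge count is preserved, so $2m' = 2(n-1)$; the new degrees are $\deg'(c)=n-2$, $\deg'(u)=2$, $\deg'(v)=1$, other leaves unchanged. In $G'$, vertex $v$ has unique neighbor $u$, so $\rel'(u,v)=n-1$; cutvertex $c$ separates $\{v,u\}$ from the remaining $n-3$ leaves, giving $\rel'(c,v)=n-2$; any other leaf $w$ has $\rel'(w,v)=1$; and $\rel'(v,v)=n-1$. Summing,
\begin{equation*}
  \cost'(v) = (n-1)\cdot\tfrac{1}{2(n-1)} + (n-1)\cdot\tfrac{1}{n-1} + (n-2)\cdot\tfrac{n-2}{2(n-1)} + (n-3)\cdot\tfrac{1}{2(n-1)} = \tfrac{n^2-2}{2(n-1)} \period
\end{equation*}
A parallel computation in $G$ gives $\rel(v,v)=\rel(c,v)=n-1$ and $\rel(w,v)=1$ for other leaves $w$, whence $\cost(v)=(n-1)\tfrac{1}{2(n-1)} + (n-1)\tfrac12 + (n-2)\tfrac{1}{2(n-1)} = \tfrac{n^2-2}{2(n-1)}$. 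Since $\cost(v)=\cost'(v)$, the swap is not improving, completing the verification that the star is an SE.
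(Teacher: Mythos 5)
Your proof is correct and follows essentially the same approach as the paper: the only nontrivial deviation is a leaf-to-leaf swap, and both you and the paper verify that it leaves the leaf's cost unchanged at $\frac{n^2-2}{2(n-1)}$. The only (cosmetic) difference is that you compute the social cost via $\SC(G)=\sum_u \sep(u)\,\cD_G(u)$ and \eqref{eq:sep-beta}, while the paper sums the individual player costs directly; both yield $\frac{1}{2}(n^2+n)-1$.
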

\begin{proof}
  Let $S \in \cG_n$ be a star.
  First we prove that $S$ is an SE.
  Clearly, by just removing an edge (without creating a new one), no player can improve.
  It is also clear that the only possibility for swapping is from one leaf to another leaf.
  Let $a, c\in V$ be leafs and $b$ the center of the star. Denote $S' \df S + (a,b,c)$.
  Then we have:
  \begin{align*}
    \cost_{S}(a) & = \frac{1}{2(n-1)}\sum_{w \in V} \rel(w,a) \cdot \deg(w) = \frac{1}{2(n-1)}\parens[big]{ (n-1)+(n-2)+(n-1)(n-1) } \\
                 & = \frac{1}{2} \parens[big]{n+1-\frac{1}{n-1}}
  \end{align*}
  After swapping:
  \begin{align*}
    \cost_{S'}(a) = \frac{1}{2(n-1)}\parens[big]{3(n-1)+(n-2)^2+(n-3)}=\frac{1}{2} \parens[big]{n+1-\frac{1}{n-1}}
  \end{align*}
  Hence, $\cost_{S}(a)=C_{S'}(a)$. Therefore, the star is an SE. Its social cost is:
  \begin{align*}
    \SC(S) & = \frac{1}{2(n-1)}\sum_{v\in V}\sum_{w \in V} \rel(w,v) \cdot \deg(w) \\
           & = (n-1)\parens[big]{\frac{1}{2}\parens[big]{n+1-\frac{1}{n-1}}}+\frac{1}{2(n-1)}\cdot(n-1) \, n \\
           & = \frac{1}{2}\parens[big]{(n^2 - 1) - 1}+\frac{n}{2} = \tfrac{1}{2} \parens{n^2+n} - 1
             \tag*{\qedhere}
  \end{align*}
\end{proof}

\begin{corollary}
  The social cost of SE for the degree\-/proportional vertex destroyer can be as high as $\Ohm{n^2}$,
  which is the worst possible order in the destruction model.\qed
\end{corollary}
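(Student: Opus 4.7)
The corollary is essentially a two-line consequence of the preceding proposition combined with a trivial universal upper bound, so I expect no real obstacle. First I would note that the lower bound is already in hand: the previous proposition exhibits the star $S \in \cG_n$ as an SE for the degree-proportional vertex destroyer, with social cost $\SC(S) = \frac{1}{2}(n^2+n) - 1$, which is in $\Ohm{n^2}$. This alone establishes that social cost of SE in this model can grow quadratically.

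Next I would justify the second half of the statement, namely that $\Ohm{n^2}$ is the worst possible order in the destruction model at large. For this I would invoke the separation formula \eqref{eq:sep-beta}, which gives $\sep(u) = n^2 - 1 - \sum_{i=1}^k \bet_i^2 \leq n^2 - 1$ for every vertex $u$ of every connected $G \in \cG_n$. Since $\SC(G) = \sum_{u \in V_n} \sep(u)\,\cD_G(u)$ is a convex combination of the values $\sep(u)$ (the weights $\cD_G(u)$ sum to~$1$), we immediately get $\SC(G) \leq n^2 - 1 = \Oh{n^2}$ for \emph{any} vertex destroyer and any connected graph. An analogous bound holds in the edge destruction model, where $\sep(e) = 2\nn(e)(n - \nn(e)) \leq n^2/2$. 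Combining the universal $\Oh{n^2}$ upper bound with the $\Ohm{n^2}$ lower bound witnessed by the star yields the corollary.

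The only substantive content beyond citing the previous proposition is the observation that no graph can have $\sep(u)$ exceeding $n^2 - 1$, which is immediate from \eqref{eq:sep-beta} (the $\bet_i$ are nonnegative). Hence the proof really is just \enquote{apply the previous proposition for the lower bound; apply \eqref{eq:sep-beta} for the matching upper bound,} which is presumably why the authors end the statement with \qed.
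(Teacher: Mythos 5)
Your proposal is correct and matches the paper's treatment: the corollary carries a \qed in its statement precisely because it follows immediately from the preceding proposition (the star as an $\Ohm{n^2}$ SE) together with the trivial $\Oh{n^2}$ upper bound on $\SC(G)$ that you correctly derive from \eqref{eq:sep-beta} and the fact that $\SC(G)$ is a convex combination of separation values. No gaps; your write-up simply makes explicit what the authors leave implicit.
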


\section{Extreme Vertex Destruction}

This model is defined similar to the extreme edge destruction model.
Denote $\Vmax$ the set of max\-/sep vertices and $\nmax \df \card{\Vmax}$.
The extreme vertex destroyer picks the vertex to destroy uniformly at random from~$\Vmax$.

We start with a first step toward understanding the worst\-/case order of social cost of an SE in this model,
by giving an SE example with super\-/linear lower bound, namely $\Ohm{n^{3/2}}$.
It is unknown at this time whether there is a matching upper bound.

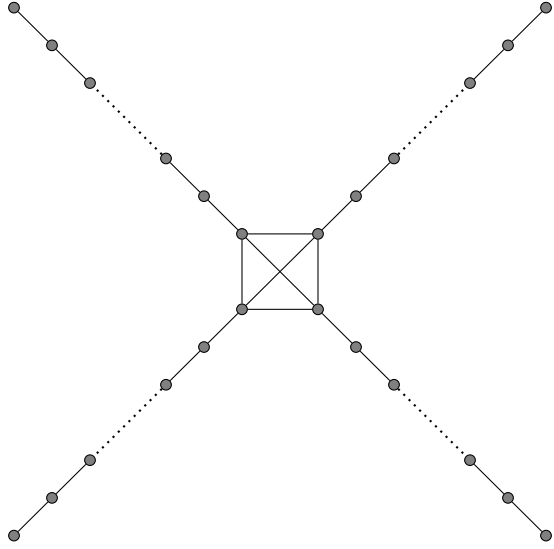
\begin{figure}[t]
  \centering
  \begin{tikzpicture}[on grid]
    \tikzstyle{vertex}=[circle, draw, fill=black!50, inner sep=0pt, minimum width=4pt]
    \foreach \i/\x/\y in {1/1/1,2/-1/1,3/1/-1,4/-1/-1} {
      \foreach \j in {1,2,3,5,6,7} {
        \node[vertex] (v-\i-\j) at (0.5*\x*\j,0.5*\y*\j) {};
        \ifnum \j=1 \relax \else \ifnum \j=5 \relax \else \pgfmathtruncatemacro{\prev}{\j-1} \path (v-\i-\prev) edge (v-\i-\j); \fi \fi
      }
      \path (v-\i-3) edge[dotted,thick] (v-\i-5);
    }
    \path (v-1-1) edge (v-2-1) (v-1-1) edge (v-3-1) (v-1-1) edge (v-4-1);
    \path (v-2-1) edge (v-3-1) (v-2-1) edge (v-4-1);
    \path (v-3-1) edge (v-4-1);
  \end{tikzpicture}
  \caption{Construction from \autoref{prop:lower-extreme} for $t=4$.}
\end{figure}

\begin{theorem}
  \label{prop:lower-extreme}
  Let $t \geq 4$ and $0 \leq k \leq 4t-5 = \The{t}$.
  Let $G=(V,E)$ be the graph consisting of a clique $C$ on $t$ vertices
  and to each vertex of $C$ there is a path of length $k$ attached (so $n \df \card{V} = t (k+1)$).
  Then $G$ is an SE with $\SC(G) = \Ohm{n^{3/2}}$.
\end{theorem}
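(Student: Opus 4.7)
The plan is to proceed in three steps: determine the max-sep set and per-vertex costs, conclude the social-cost lower bound, and then verify the SE property by a symmetry-reduced case analysis of swaps.

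\textbf{Stage 1: Max-sep set and social cost.} Using~\eqref{eq:sep-beta}, I compute separations directly. Destroying a clique vertex $v_i$ splits $G$ into components of sizes $k$ (the attached path) and $n - 1 - k = (t-1)(k+1)$ (the rest), while destroying a path vertex $p_{i,j}$ with $j < k$ yields components of sizes $k - j$ and $n - 1 - (k - j)$, and the leaf $p_{i,k}$ has $\sep(p_{i,k}) = 2(n-1)$. Setting $f(x) \df x^2 + (n - 1 - x)^2$, the function $f$ is strictly decreasing on $[0, (n-1)/2]$, and $k \leq (n-1)/2$ (equivalent to $(t - 2) k + t - 1 \geq 0$, immediate for $t \geq 3$). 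Since $k - j < k$ for $j \geq 1$, this gives $f(k-j) > f(k)$ and hence $\sep(v_i) > \sep(p_{i,j})$ strictly; therefore $\Vmax = C$ and $\nmax = t$. The relevances follow directly from the block structure: $\rel(v_i, v_i) = n - 1$, $\rel(v_i, v) = k + 1$ for any $v$ outside branch $i$, and $\rel(v_i, p_{i,j}) = n - k$ for $p_{i,j}$ on branch $i$. Averaging over $\Vmax$ gives $\cost(v_i) = (2n - k - 2)/t$ and $\cost(p_{i,j}) = (2n - 2k - 1)/t$, whence $\SC(G) = (2n - k - 2) + k (2n - 2k - 1) = 2 n k + \The{n}$. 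Substituting $k = \The{t}$ and $n = t(k+1) = \The{t^2}$, equivalently $k = \The{\sqrt{n}}$, yields $\SC(G) = \Ohm{n^{3/2}}$.

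\textbf{Stage 2: SE verification.} By the automorphism group of $G$ (the symmetric group on the $t$ branches, together with the path ordering), it suffices to consider representative swaps $s = (a, b, c)$ originating at one of: (i)~the clique vertex $v_1$, removing either a clique edge $v_1 v_2$ or its attached edge $v_1 p_{1,1}$; (ii)~the inner vertex $p_{1,1}$; (iii)~an interior path vertex $p_{1,j}$ with $1 < j < k$; (iv)~the leaf $p_{1,k}$. The target $c$ may be any non-neighbor of $a$. For each candidate swap, my plan is to describe the block-cutvertex structure of $G' \df G + s$, use~\eqref{eq:sep-beta} to recompute $\sep'$ on the affected vertices, identify $\Vmax'$, compute $\cost'(a)$, and verify $\cost'(a) \geq \cost(a)$. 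Pure edge removals can be absorbed into the same framework: removing $ab$ either disconnects $G$ (giving $a$ infinite cost) or leaves every relevance $\rel(u, a)$ weakly monotone while not shifting probability mass in $a$'s favor, so it cannot strictly decrease $\cost(a)$.

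\textbf{Main obstacle.} I expect case~(iii) to be the hardest: a swap by an interior path vertex to a vertex on a different branch creates a long cycle running through two branches, which collapses several blocks and drops the separations of the affected clique vertices, while simultaneously promoting new cutvertices whose separations can scale with $k$. Consequently $\Vmax'$ may differ substantially from $\Vmax$, and the hypothesis $k \leq 4 t - 5$ is precisely the calibration that keeps every clique vertex strictly max-sep in every post-swap graph, so that the destroyer's probability measure cannot concentrate on new cutvertices in a way that would help the swapping player. Once this calibration is established, the inequality $\cost'(a) \geq \cost(a)$ in each case reduces to a polynomial comparison in $t$ and $k$, verifiable by routine algebra; cases (i), (ii), (iv) are structurally similar but simpler because fewer blocks are affected.
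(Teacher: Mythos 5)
Your Stage~1 is correct and is essentially the paper's own computation: the same monotonicity argument gives $\Vmax = C$, and your per-player cost sums reproduce the paper's $\SC(G) = 2\parens{(n-1)+k(n-1-k)} = \Ohm{tk^2} = \Ohm{n^{3/2}}$. The gap is Stage~2: it is a plan rather than a proof, and the two places where the real work lies are exactly the ones your plan misdescribes. First, no swap by a path vertex ever \enquote{creates a long cycle running through two branches}: every edge outside the clique is a bridge, so removing it disconnects the graph and the new edge merely reconnects the two pieces; the post-swap graph is again a clique with trees attached, and the only swaps touching the two-connected block are clique-edge swaps by clique vertices (after which, since $t \geq 4$, the clique stays two-connected). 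Second, your claimed invariant --- that $k \leq 4t-5$ \enquote{keeps every clique vertex strictly max-sep in every post-swap graph} --- is false. After a clique vertex swaps a clique edge onto another branch rooted at $u$, the vertex $u$ drops out of $\Vmax'$; and after a path vertex detaches its subtree and re-attaches it to a different branch rooted at $w$, the separations concentrate on that branch so that $\Vmax' \subseteq V_w \cup \set{w}$, i.e.\ essentially no clique vertex remains max-sep. That migration case is the one your calibration claim cannot handle: one must bound the swapping player's best conceivable outcome (only $w$ max-sep, with $\rel'(w,a) = n-2k$) and show the cost still does not drop, which is a separate estimate and not \enquote{routine algebra} downstream of a max-sep invariant.

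The role of $k \leq 4t-5$ is also different from what you state: it is needed when a player re-attaches a path \emph{within its own branch} (a clique vertex moving its attachment point down its path, or a path vertex shortcutting along its branch). There the new attachment point $c$ at distance $l$ acquires separation $n^2-1-(l-1)^2-(k-l)^2-(n-k)^2$, and the condition $n-k > l(k+1-l)$ for all admissible $l$, i.e.\ $n-k > (k+1)^2/4$, i.e.\ $k \leq 4t-5$, is exactly what prevents $c$ from entering $\Vmax'$, where it would have relevance only $k-1$ for the swapper and make the swap improving. Until you carry out this computation and the cross-branch migration estimate, the SE claim is unproven.
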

\begin{proof}
  For each player $v$ at distance $0 \leq i \leq k$ from $C$, we have:
  \begin{equation*}
    \sep(v) = 2\parens[big]{ (n-1) + (k-i) \, (n-1-(k-i)) }
  \end{equation*}
  Since $k-i \leq (n-1)/2$ and since the function $x \mapsto x (n-1-x)$ is strictly increasing on $[0,(n-1)/2]$,
  separation is strictly largest when $i=0$, that is, when $v \in C$.
  It follows $\Vmax = C$ and $\SC(G) = 2\parens{ (n-1) + k (n-1-k) } = \Ohm{tk^2}$.
  If we choose $k$ maximal, then $k=\The{t}$, hence $\SC(G) = \Ohm{t^3} = \The{n^{3/2}}$.

  We prove the SE property.
  Just removing an edge (without building a new one) is clearly not an option for the players on the paths.
  For a player of the clique $C$, also nothing changes when removing an edge
  since due to $t \geq 4$, the set $C$ remains two\-/connected.

  For $u \in C$ denote $V_u$ the $k$ vertices on the path attached to $u$
  (note that $u \not\in V_u$).
  In the following, whenever we consider a swap, by $G'$ we refer to the graph we obtain from $G$ by applying the swap.

  We start with the different swaps available to a player $a \in C$.
  We have $\rel(u,a) = k+1$ for each $u \in \Vmax \setminus \set{a}$, hence
  \begin{equation*}
    \cost(a) = \frac{(n-1)+(t-1)(k+1)}{t} \period
  \end{equation*}

  \begin{enumerate}[label=(\roman*)]
  \item
    Consider a swap $(a,b,c)$ with $b \in C \cap N(a)$ and $c \in V_u$ for $u \in C \setminus \set{a}$,
    that is, player $a$ swaps an edge from the clique to some path but not the one connected to $a$ itself.
    Since $t \geq 4$, the set $C$ remains two\-/connected.
    Separation of $u$ and of some vertices in $V_u$ is reduced.
    All other separations are maintained.
    It follows:
    \begin{equation*}
      \cost(a) - \cost'(a) = \frac{(n-1)+(t-1)(k+1)}{t} - \frac{(n-1)+(t-2)(k+1)}{t-1} = \frac{k-n+2}{t(t-1)} < 0
    \end{equation*}
    Hence the swap is no improvement for player $a$.
  \item
    Consider a swap $(a,b,c)$ with $b \in C \cap N(a)$ and $c \in V_a$,
    that is, player $a$ swaps an edge from the clique to its own path.
    Again, since $t \geq 4$, the set $C$ remains two\-/connected.
    The separation values on some vertices in $V_a$ decrease, but that does not change the set of max\-/sep vertices
    nor their relevance for $a$.
    Hence player $a$'s cost is maintained.
  \item\label{item:lower-extreme:ii}
    Consider a swap $(a,b,c)$ with $b \in V_a$; then in order to keep the graph connected, we have $c \in V_a$.
    That is, player $a$ swaps the first edge on its path to some vertex on that path.
    We only have to exclude that $c$ becomes max\-/sep; if we achieve that, then we know that $a$'s cost is maintained.
    Let $2 \leq l \leq k-1$ be the distance between $a$ and $c$. Then by \eqref{eq:sep-beta}:
    \begin{align*}
      & \sep'(c) < \sep_{\max} \\
      \iff & n^2 - 1 - (l-1)^2 - (k-l)^2 - (n-k)^2 < n^2 - 1 - k^2 - (n-1-k)^2 \\
      \iff & k^2 + (n-1-k)^2 < (l-1)^2 + (k-l)^2 + (n-k)^2 \\
      \iff & (n-1-k)^2 < - 2 l (k+1-l) + 1 + (n-k)^2 \\
      \iff & -2(n-k) < - 2 l (k+1-l) \\
      \iff & n-k > l (k+1-l) \impliedby n-k > \frac{(k+1)^2}{4} \\
      \iff & t(k+1)> \frac{(k+1)^2}{4} + k \iff t > \frac{k+1}{4} + 1 - \frac{1}{k+1} \\
      \impliedby & 4 t - 5 \geq k
    \end{align*}
    The latter is true by the restriction on $k$ in the statement of the theorem.
    In this computation, we used again that the function $x \mapsto x(k+1-x)$ is increasing on $[0,(k+1)/2]$.
  \end{enumerate}

  We continue with the swaps available to a player $a \in V_u$ for some $u \in C$.

  \begin{enumerate}[label=(\roman*),resume]
  \item Consider a swap $(a,b,c)$ with $\dist(u,a) < \dist(u,b)$.
    Then $c \in V_u$ with $\dist(u,b) < \dist(u,c)$, since otherwise the graph would become disconnected.
    From a computation like in \ref{item:lower-extreme:ii}, it follows that this swap cannot make $c$ max\-/sep.
    Hence the cost of player $a$ does not change.
  \item Consider a swap $(a,b,c)$ with $\dist(u,a) > \dist(u,b)$.
    If $c \in V_u$, then again $a$'s cost will not change.
    If $c \in C$, then $c$ will become the only max\-/sep vertex in the new graph,
    clearly increasing $a$'s cost.

    Now let $c \in V_w$ for some $w \neq u$, that is, some vertices migrate from $u$'s path to $w$'s path.
    Separation of $w$ and separation of the vertices $v \in V_w$ with $\dist(w,v) \leq \dist(w,c)$ will increase.
    All other separations are reduced or maintained, so we have $\Vmax' \subseteq V_w \cup \set{w}$.
    In the best case, $k$ vertices migrate to $w$'s path, only $w$ becomes max\-/sep,
    and $\rel'(w,a) = n-2k$.
    In this case:
    \begin{align*}
      \cost(a) - \cost'(a)
      & = \frac{(t-1) (k+1) + (n-k)}{t} - (n-2k) \\
      & = \frac{(t-1) (k+1) + (t(k+1)-k) - t (t(k+1)-2k)}{t} \\
      & = \frac{(k+1) (2 t-t^2-1) + (2t-1) k}{t} \\
      & \leq (k+1) (2-t) + 2 k \leq -2 (k+1) + 2 k < 0
    \end{align*}
    Hence the swap is no improvement for player $a$.\qedhere
  \end{enumerate}
\end{proof}

\begin{remark}
  The graph in \autoref{prop:lower-extreme} is no SE if $k \geq 4t-4 = 4(t-1)$,
  that is, if $k$ is larger than the upper bound in the theorem.
\end{remark}
\begin{proof}
  Let $a \in C$ and $b$ her neighbor in $V_a$.
  Let $c \in C$ be at distance $l$ from $a$, to be specified later.
  Denote $G' \df G + (a,b,c)$.
  If $c$ becomes max\-/sep in $G'$, then player $a$'s cost will decrease
  since $\rel'(c,a) = k$, whereas the relevance of a max\-/sep vertex in $G$ for $a$ is $k+1$ or $n-1$.
  By the computation in \autoref{prop:lower-extreme}\ref{item:lower-extreme:ii},
  we see that $c$ becomes max\-/sep if $n-k \leq l (k+1-l)$. We have:
  \begin{equation*}
    n-k \leq l (k+1-l) \iff
    t \leq \frac{l (k+1-l)}{k + 1} + 1 - \frac{1}{k+1}
    \impliedby t \leq \frac{l (k+1-l)}{k + 1} + \frac{4}{5}
  \end{equation*}
  For odd $k \geq 4(t-1)$ and $l \df \frac{k+1}{2}$, we have:
  \begin{align*}
    t \leq \frac{l (k+1-l)}{k + 1} + \frac{4}{5}
    & \iff t \leq \frac{(k+1)^2/4}{k + 1} + \frac{4}{5}
      \iff t \leq \frac{k+1}{4} + \frac{4}{5} \\
    & \impliedby t \leq \frac{4t-3}{4} + \frac{4}{5}
      \iff 0 \leq \frac{-3}{4} + \frac{4}{5} \iff 0 \leq \frac{1}{20}
  \end{align*}
  For even $k \geq 4(t-1)$ and $l \df \frac{k}{2}$, we have:
  \begin{align*}
    t \leq \frac{l (k+1-l)}{k + 1} + \frac{4}{5}
    & \iff t \leq \frac{\frac{k^2}{4} + \frac{k}{2}}{k + 1} + \frac{4}{5}
      \iff t \leq \frac{k^2+k}{4(k + 1)} + \frac{k}{4(k+1)} + \frac{4}{5} \\
    & \impliedby t \leq \frac{k}{4} + \frac{1}{4} \frac{4}{5} + \frac{4}{5}
      \iff t \leq \frac{k}{4} + 1
      \impliedby t \leq t-1 + 1
      \tag*{\qedhere}
  \end{align*}
\end{proof}

In the remainder of this work, we provide more insight into the structure of tree SE graphs for the extreme vertex destroyer.

\begin{theorem}
  There is no SE tree with $n_{\max}=1$, provided that $n \geq 8$.
\end{theorem}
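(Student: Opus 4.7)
I would proceed by contradiction. Suppose $G$ is a tree SE under the extreme vertex destroyer with $n \geq 8$ and $\Vmax(G) = \set{u^*}$. Write the subtrees of $G - u^*$ as $\eli{T}{d}$ with $\beta_i \df \card{T_i}$ and $\beta_1 \geq \cdots \geq \beta_d$. Because only $u^*$ is destroyed, every player $v \neq u^*$ has cost $\rel(u^*, v) = n - \beta_{j}$, where $T_j$ is the subtree of $G - u^*$ containing $v$; players in the smallest subtree $T_d$ bear the largest cost $n - \beta_d$. Since a leaf attains $\sum_i \beta_i^2 = (n-1)^2$, strictly larger than the value attained at any interior vertex, $u^*$ cannot be a leaf and hence $d \geq 2$.

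The plan is to exhibit an improving swap for a leaf $a$ of the smallest subtree $T_d$. Let $b$ be the unique neighbor of $a$, and choose $c$ to be a leaf of $T_1$ at maximum distance from $u^*$; consider the swap $s = (a, b, c)$. In the tree $G' \df G + s$, the subtree of $G' - u^*$ containing $a$ has size $\beta_1 + 1$, so if $u^*$ remains the unique max\-/sep vertex of $G'$, then $a$'s new cost is $n - \beta_1 - 1 < n - \beta_d$, an immediate improvement and the desired contradiction.

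The main obstacle is the possibility that $u^*$ is no longer the unique max\-/sep vertex of $G'$. A direct computation via \eqref{eq:sep-beta} shows that $\sum_i \beta_i^2$ at $u^*$ increases by $2(\beta_1 - \beta_d + 1)$ when $\beta_d \geq 2$ (and by $2\beta_1$ when $\beta_d = 1$), while the same quantity can decrease at vertices of $T_1$ that lie on the path from $u^*$ to $c$. Choosing $c$ as a deep leaf localizes the perturbation far from $u^*$, and one verifies, using the uniqueness hypothesis (each $v \neq u^*$ has sum of squares strictly larger than $u^*$'s) together with $n \geq 8$, that only vertices near $c$ could possibly join $u^*$ in $\Vmax(G')$. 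For any such new max\-/sep vertex $u \in \Vmax(G') \cap T_1$, the relevance $\rel'(u, a)$ is bounded by the size of $a$'s component in $G' - u$; since the path from $a$ to $u^*$ in $G'$ goes through $c$, this component is small. Averaging over $\Vmax(G')$ then shows that $a$'s new cost is still strictly less than $n - \beta_d$.

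I expect the hardest step to be this final quantitative control: bounding both the membership of $\Vmax(G')$ and the individual relevances tightly enough that the average beats $n - \beta_d$. The threshold $n \geq 8$ enters exactly here, supplying enough slack in the sum\-/of\-/squares inequalities for the estimates to go through, whereas for smaller $n$ degenerate configurations do exist. A secondary subtlety is the case $\beta_d = 1$: then $a = u_d$ is a leaf neighbor of $u^*$ and the swap actually removes an edge incident to $u^*$, dropping its degree from $d$ to $d-1$, but the same sum\-/of\-/squares comparisons carry through with minor bookkeeping.
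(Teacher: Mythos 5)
Your overall strategy---derive a contradiction by finding an improving swap for a player in the smallest subtree of $T-u^*$---is in the spirit of the paper's proof, but the step you yourself flag as hardest is exactly where the proof has to live, and the heuristic you offer for it points in the wrong direction. After your swap $(a,b,c)$, the separation of $u^*$ strictly decreases (by $2(\beta_1-\beta_d+1)$, as you note), while the separation of a vertex $y$ on the $u^*$--$c$ path changes by $2(m_y-s_y-1)$, where $s_y$ is the size of the $c$\-/side component of $G-y$ and $m_y$ the size of the $u^*$\-/side component; this is a strict \emph{increase} for every $y$ whose $c$\-/side subtree is small, i.e.\ precisely for the vertices near $c$. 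Moreover, vertices in $\elix{T}{2}{d-1}$ whose separation is unchanged can also enter $V_{\max}(G')$ merely because $\sep(u^*)$ dropped. The decisive error is in your bound on the resulting cost: you write that $\rel'(y,a)$ is ``bounded by the size of $a$'s component in $G'-y$,'' which is ``small.'' But relevance is $n$ \emph{minus} that component's size: after destroying such a $y$, player $a$ sits in the $c$\-/side component of size $s_y+1$, so $\rel'(y,a)=n-s_y-1$, which is \emph{large} precisely because the component is small. If any $y$ with $s_y+1\le\beta_d$ ends up dominating $V_{\max}(G')$, then $\cost'(a)\ge n-\beta_d=\cost(a)$ and the swap is not improving. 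So attaching $a$ to a deep leaf $c$ does not localize the perturbation harmlessly; it manufactures exactly the dangerous configuration, and nothing in your sketch rules it out. (A secondary unaddressed point: even in the benign outcome where only a neighbor $y_1$ of $u^*$ in $T_1$ becomes the sole max\-/sep vertex, one only gets $\rel'(y_1,a)\le n-\beta_1\le n-\beta_d$, which need not be strict when all subtrees have equal size.)

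The paper sidesteps this trap with two devices worth comparing against. For $k\ge 3$ components it moves the \emph{entire} smallest subtree $K_1$ (the swapping player being the neighbor $v$ of $u$ in $K_1$) onto a leaf $w$ of the \emph{second\-/smallest} subtree; the point is the identity $\rel'(w,v)=\rel(u,v)$, which makes $v$ indifferent to $w$ becoming max\-/sep and strictly better off as soon as $V_{\max}(G')$ contains anything besides $w$ --- a condition that reduces to the clean inequality $\sum_{2\le i<j\le k}n_in_j\ge n_1n_2$, automatic for $k\ge 3$. For $k=2$ an entirely different swap is performed by the center $u^*$ itself, and that is the only place where $n\ge 8$ is used. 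Your proposal has neither this ``indifference anchor'' nor a case distinction on the number of components, so as written it is not a proof; the missing quantitative control of $V_{\max}(G')$ is a genuine gap, not a routine verification.
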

\begin{proof}
  Let $T=(V,E)$ be an SE tree with $n_{\max}=1$ and $u\in V_{\max}$. It is clear that $u$ is a cutvertex,
  otherwise $G$ is two\-/connected and thus $n_{\max}=n$.
  Denote $\eli{K}{k}$, with $k \geq 2$, the components of $T-u$
  ordered by non\-/decreasing sizes $|K_1|\leq|K_2|\leq   \hdots\leq |K_k|$.
  For convenience, denote $n_i \df \card{K_i}$ for each~$i$.

  Let $v\in K_1$ and $uv\in E$ and $w \in K_2$ with $\deg(w)=1$.
  We consider $T' \df T + (v,u,w)$,
  that is, we detach $K_1$ from $u$ and re-attach it to a leaf of $K_2$.

  Then $\rel'(w,v) = \rel(u,v)$, and so we have an improvement for $v$
  whenever $\Vmax'$ contains at least one vertex distinct from $w$.
  The latter is the case if $\sep'(w) \leq \sep'(u)$.
  Using \eqref{eq:sep-beta}, we compute:
  \begin{align*}
    \sep'(w) \leq \sep'(u)
    & \iff n^2 - 1 - n_1^2 - (n-n_1-1)^2 \leq n^2 - 1 - (n_1+n_2)^2 - \sum_{i=3}^k n_i^2 \\
    & \iff n_1^2 + (n-n_1-1)^2 \geq (n_1+n_2)^2 + \sum_{i=3}^k n_i^2 \\
    & \iff (n-n_1-1)^2 \geq 2n_1n_2 + \sum_{i=2}^k n_i^2 \\
    & \iff \parens{\sum_{i=2}^k n_i}^2 \geq 2n_1n_2 + \sum_{i=2}^k n_i^2 \\
    & \iff \sum_{2 \leq i < j \leq k}n_in_j \geq n_1n_2 \tag{$*$}\label{eq:tree-nmax-1-condition}
  \end{align*}
  Now, \eqref{eq:tree-nmax-1-condition} is true if $k \geq 3$, since $n_1 \leq n_3$.

  Hence we may assume that $k = 2$.
  Let $v$ be the only vertex in $N(u) \cap K_2$.
  If $\deg(v) \geq 3$, then:
  \begin{align*}
    \sep(v) & \geq n^2 - 1 - 1 - (n_2 - 2)^2 - (n_1 + 1)^2 \\
            & = n^2 - 1 - 1 - n_2^2  + 4n_2 - 4 - n_1^2 - 2n_1 - 1 \\
            & = \sep(u) - 1 + 4n_2 - 4 - 2n_1 - 1 \\
            & = \sep(u) + 2 (2n_2 - n_1) - 6 \\
            & \geq \sep(u) + 2 n_2 - 6
              \geq \sep(u)
  \end{align*}
  The last step is true since $n_2 \geq (n-1) / 2 > 3$.
  We conclude that $\deg(v) \leq 2$.
  Since $n \geq 8$, there is $w \in N(v) \setminus \set{u}$ with $\deg(w) \geq 2$.
  We consider $T' := T + (u,v,w)$. We have:
  \begin{align*}
    \sep'(u) < \sep'(w) & \impliedby n_1^2 + n_2^2 > 1 + (n_2 - 2)^2 + (n_1 + 1)^2 \\
                        & \iff 0 > 1 - 4n_2 + 4 + 2n_1 + 1 \iff 0 > 2 (n_1-2n_2) + 6 \\
                        & \iff 0 > n_1-2n_2 + 3 \impliedby n_2 > 3
  \end{align*}
  Since the last statement is true, we know that $\Vmax' = \set{w}$, hence $\cost'(u) = n_2 < n-1 = \cost(u)$.
\end{proof}

\end{document}